\definecolor{dullmagenta}{rgb}{0.6,0,0.5}   
\definecolor{darkblue}{rgb}{0,0,0.4}
\newcommand{\eq}[1]{\eqref{#1}}
\newtheorem{theorem}{Theorem}[section]
\newtheorem{proposition}[theorem]{Proposition}
\newtheorem{lemma}[theorem]{Lemma}
\newtheorem{definition}[theorem]{Definition}
\newtheorem{remark}[theorem]{Remark}
\numberwithin{equation}{section}
\newenvironment{acknowledgement}{\emph{Acknowledgement.}}
\DeclareMathOperator{\supp}{supp}
\DeclareMathOperator{\tr}{tr}
\DeclareMathOperator{\diam}{diam}
\DeclareMathOperator{\spr}{spr}
\newcommand\R{\mathbb R}
\newcommand\N{\mathbb N}
\newcommand\Z{\mathbb Z}
\renewcommand\P{\mathbb P}
\newcommand\E{\mathbb E}
\renewcommand\L{\mathrm{L}}
\newcommand{\cc}{\mathrm{c}}
\newcommand\di{\mathrm{d}}
\newcommand\what{\widehat}
\newcommand{\bom}{{\boldsymbol{\omega}}}
\newcommand\eps{\varepsilon}
\newcommand\La{\Lambda}
\newcommand{\vphi}{\varphi}
\newcommand\Chi{\raisebox{.2ex}{$\chi$}}
\newcommand{\abs}[1]{\left\lvert #1 \right\rvert}
\newcommand{\norm}[1]{\left\lVert #1 \right\rVert}
\newcommand{\scal}[1]{\left\langle #1 \right\rangle}
\newcommand{\set}[1]{\left\{ #1 \right\}}
\newcommand{\pa}[1]{\left( #1 \right)}
\newcommand{\up}[1]{^{(#1)}}
\newcommand\beq{\begin{equation}}
\newcommand\eeq{\end{equation}}
\newcommand{\qtx}[1]{\quad\text{#1}\quad}
\newcommand{\sqtx}[1]{\; \text{#1} \;}
\begin{document}

\title[Trimmed discrete Schr\"odinger operators]
{Ground state energy of  trimmed discrete Schr\"odinger operators and  localization for trimmed Anderson models}

\author{Alexander Elgart}
\address[A. Elgart]{Department of Mathematics; Virginia Tech; Blacksburg, VA, 24061, USA}
 \email{aelgart@vt.edu}

\author{Abel Klein}
\address[A. Klein]{University of California, Irvine;
Department of Mathematics;
Irvine, CA 92697-3875,  USA}
 \email{aklein@uci.edu}

\thanks{A.E. was  supported in part by the NSF under grant DMS-1210982.}
\thanks{A.K. was  supported in part by the NSF under grant DMS-1001509.}


\maketitle

\begin{abstract}
We consider  discrete  Schr\"odinger operators of the form
$H=-\Delta +V$ on $\ell^2(\Z^d)$, where  $\Delta$ is the discrete Laplacian and $V$ is a bounded potential. Given $\Gamma \subset \Z^d$,  the $\Gamma$-trimming of $H$ is  the  restriction  of $H$ to  $\ell^2(\Z^d\setminus\Gamma)$, denoted by  $H_\Gamma$.  We investigate the dependence of the ground state energy  $E_\Gamma(H)=\inf \sigma (H_\Gamma)$ on $\Gamma$. We show that for  relatively dense proper subsets $\Gamma$ of   $\Z^d$  we always  have
  $E_\Gamma(H)>E_\emptyset(H)$.  We use this lifting of the ground state energy    to establish Wegner estimates and  localization at the bottom of the spectrum for  $\Gamma$-trimmed Anderson models, i.e.,  Anderson models with the random potential supported by the set $\Gamma$.
\end{abstract}

\tableofcontents

\section{Introduction}

We consider  discrete  Schr\"odinger operators of the form
$H=-\Delta +V$ on $\ell^2(\Z^d)$, where  $\Delta$ is the discrete Laplacian, defined by
\beq
\pa{-\Delta \vphi}(x)=  \sum_{\substack{y \in {\Z^d}, \\ \norm{x-y}=1}}  \pa{\vphi(x)-\vphi(y)}= 2d \vphi(x) -  \sum_{\substack{y \in {\Z^d}, \\\norm{x-y}=1}}   \vphi(y),
\eeq
and $V$ is a bounded potential.  Given $\Gamma \subsetneq \Z^d$,  the $\Gamma$-trimming of $H$ is     the restriction $H_\Gamma$ of  $\Chi_{\Gamma^{\cc}} H \Chi_{\Gamma^{\cc} }$ to $\ell^2(\Gamma^{\cc})$, where $\Chi_A$ denotes the characteristic  function of the set $A$ and $A^\cc= \Z^d \setminus A$ for $A\subset \Z^d$.  We focus our attention on   $E_\Gamma(H)=\inf \sigma (H_\Gamma)$,  the ground state energy (or bottom of the spectrum) of  the trimmed   discrete  Schr\"odinger operator $H_\Gamma$.  (Note that with this notation  $H= H_\emptyset$ and $E_\emptyset(H)= \inf \sigma (H)$.) Since $E_\Gamma(H)$  is a nondecreasing function of the set  $\Gamma$, trimming lifts the bottom of the spectrum, that is, $E_\Gamma(H)\ge E_\emptyset(H)$.

We show that for  relatively dense proper subsets $\Gamma$ of   $\Z^d$  we always  have
 strict lifting of the bottom of  the spectrum, i.e.,  $E_\Gamma(H)>E_\emptyset(H)$.
 We use this lifting of the ground state energy    to establish Wegner estimates and  localization at the bottom of the spectrum for  $\Gamma$-trimmed Anderson models, i.e.,  Anderson models with the random potential supported by the set $\Gamma$.

\subsection{The ground state energy of trimmed  discrete  Schr\"odinger operators}
Our motivation comes from  continuous Schr\"odinger operators $H=-\Delta + V$ on $\L^2(\R^d)$, where $\Delta$ is the  Laplacian operator and $V$ is a bounded potential.   Let us consider first the case  $H=-\Delta$ and $\Gamma^\cc$ an open subset of $\R^d$, and let  $\Delta_\Gamma$ be  the Laplacian on $\Gamma^\cc$ with Dirichlet boundary condition.  When $\overline{\Gamma^\cc}$ is compact, the ground state energy $E_\Gamma(-\Delta)$ of $-\Delta_\Gamma$ is  the first eigenvalue $\lambda_\Gamma$ of   $-\Delta_\Gamma$.   The problem of obtaining a lower bound for the first eigenvalue of the Dirichlet Laplacian on a compact Riemannian manifold has been  intensively studied in Geometric Analysis, and it is given by Cheeger's inequality  \cite{Ch}:
$
\lambda_\Gamma\ge \frac { \beta(\Gamma)^2}{4},
$
  where  $ \beta(\Gamma)$ is Cheeger's isoperimetric constant for the  set $\Gamma^\cc$.
 It is known that $ \beta(\Gamma)>0$ if   $\overline{\Gamma^\cc}$ is compact,    but for noncompact sets $\overline{\Gamma^\cc}$ the Cheeger  isoperimetric constant may be zero.

 Cheeger's inequality has been extended to the discrete case \cite{Dod,LS}, where clearly  $ \beta(\Gamma)>0$ if  $\Gamma^\cc$ is a finite set. But it is not difficult to see that
 $ \beta(\Gamma)=0$ if we can find a  sequence of  boxes in $\Z^d$, $\Lambda_{K_n}(x_n)$  ($\Lambda_K(x)$ is the box of side $K\in \N$ centered at $x\in \Z^d$), such that $\lim_{n\to \infty} \frac{\abs{\Gamma\cap\Lambda_{K_n}(x_n)}}{\abs{\Gamma^{\cc}\cap\Lambda_{K_n}(x_n)}}=0$.   This lead us to consider relatively dense  subsets $\Gamma$ of   $\Z^d$, for which we show $ \beta(\Gamma)>0$.  (See Section \ref{sec:isoperim};  $\beta(\Gamma)$ is defined in \eq{defbetaGamma}.)

The addition of a potential  $V$  breaks down  Cheeger's argument. Indeed, in general  flat functions  are no longer good approximants for the low-lying eigenvectors of $H=-\Delta + V$.  For example,  let $H_\lambda = -\Delta +\lambda V$, where $V$ is a periodic potential  whose average over a fundamental cell is equal to zero.  Then $E_\emptyset(H_\lambda)<0$ for all $\lambda >0$  \cite[Theorem~1]{GGS} (the result there is proven for the continuum, but it is easy to see that  holds in the discrete case as well), but it can be shown that $\beta_\lambda(\emptyset)=0$ for $\lambda$ small, where  $\beta_\lambda(\emptyset)$ is the Cheeger constant for $H_\lambda$.
 Another striking counterexample can be constructed by taking $H_\lambda = -\Delta +\lambda V$ with  $V= - \Chi_{\{0\}}$,  a negative  rank one perturbation to  $-\Delta$, and  $\lambda>0$.   It is  well known that  in this case  $E_\emptyset(H_\lambda)<0$ for all $\lambda >0$, while it is easy to see that  $\beta_\lambda(\emptyset)=0$ for $\lambda\le 2d$.

For continuous Schr\"odinger operators the  bound $E_{\Gamma}(H)>E_{\emptyset}(H)$ can be  established in the presence of an arbitrary bounded potential  using the unique continuation principle   \cite{Kuc,RV}.   Unfortunately,  discrete Schr\"odinger operators do not satisfy  a unique continuation principle. It turns out, however, that {\em the  ground state} of a discrete Schr\"odinger operator $H$ enjoys a similar property, which suffices to establish the desired result.

It is intuitively clear that the Schr\"odinger operator $ H_\Gamma$ is, in a suitable sense,    the limit  of the Schr\"odinger operators $ H_\Gamma(t)=H + t \Chi_\Gamma$  on $\ell^2(\Z^d)$ as $t\rightarrow\infty$.  This is the motivation behind Theorem \ref{thmEinftyEGamma}, where we  obtain a lower bound for $E_{\Gamma}(H)-E_{\emptyset}(H)$ as the limit of lower bounds for $E_{\Gamma}(H,t)-E_{\emptyset}(H)$, where   $E_\Gamma(H,t)= E_\emptyset (H(t))$.    Note that  $E_\Gamma(H,t) $ is nondecreasing in $t$, so   $E_\Gamma(H,\infty):=\lim_{t\to \infty }E_\Gamma(H,t) =\sup_{t\ge 0}E_\Gamma(H,t)$, and it follows from the min-max principle that 
$E_\Gamma(H,t) \le  E_\Gamma(H)$ for all $t\ge 0$, so 
\beq
E_{\emptyset}(H) \le E_\Gamma(H,t)\le  E_\Gamma(H,\infty) \le E_\Gamma(H).
\eeq

Before stating our results, we introduce some additional notation. A  bounded potential $V$ is given  by multiplication by a
 a function $V\colon \Z^d \to \R$ with $V_\infty=\norm{V}_\infty< \infty$.  We set $V_+= \max\set{V,0}$ and
$V_- = - \min\set{V,0}$; note that   $V= V_+ -  V_-$, $V_\pm \ge 0$,  and $V_+ V_-=0$.  
We  define the spread of the bounded potential  $V$  by
 \[\spr (V)= \sup_{x\in \Z^d} V(x) -  \inf_{x\in \Z^d} V(x)\in [0,\infty).\]  
 We  also introduce the following notation:
 \begin{gather} \notag
 Y_{d,V}=2d+1  +{\spr}(V) ,\\  \notag
 \delta_\Gamma(H)=  E_\Gamma(H) - E_{\emptyset}(H),\quad \delta_\Gamma(H,t)=  E_\Gamma(H,t)- E_{\emptyset}(H).
 \end{gather}

\begin{theorem}\label{thmEinftyEGamma} Let   $H=-\Delta +V$ be a Schr\"odinger operator on $\ell^2(\Z^d)$ and $\Gamma \subsetneq \Z^d$. Then
\beq
 E_\Gamma(H,\infty)=E_\Gamma(H).
 \eeq 
 Moreover,
we have 
 \begin{align}\label{Etest33}
2d +  \spr(V)\ge \delta_\Gamma(H,t)\ge \frac {t }{t +6d +2\spr (V)}\,\delta_\Gamma(H) \qtx{for all} t\ge 0.
\end{align}
It follows that  ${E}_\Gamma(H)> E_{\emptyset}(H) $ if and only if $E_\Gamma(H,t)>E_{\emptyset}(H)$ for all $t>0$.
\end{theorem}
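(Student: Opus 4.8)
The three assertions are closely linked; the substance is the lower bound in \eqref{Etest33}, and I would deduce everything else from it. The identity $E_\Gamma(H,\infty)=E_\Gamma(H)$ follows because $E_\Gamma(H,t)\le E_\Gamma(H)$ for all $t$ by min-max (as noted above), while letting $t\to\infty$ in that lower bound forces $\liminf_{t\to\infty}\delta_\Gamma(H,t)\ge\delta_\Gamma(H)$; since $E_\Gamma(H,\infty)$ is the monotone limit of the $E_\Gamma(H,t)$, it must equal $E_\Gamma(H)$. The ``if and only if'' is then immediate: if $\delta_\Gamma(H)>0$ the lower bound gives $\delta_\Gamma(H,t)\ge\frac{t}{\,t+6d+2\spr(V)\,}\delta_\Gamma(H)>0$ for every $t>0$, while conversely $\delta_\Gamma(H,t)>0$ for one $t>0$ already forces $\delta_\Gamma(H)\ge\delta_\Gamma(H,t)>0$ since $E_\Gamma(H,t)\le E_\Gamma(H)$. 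The upper bound in \eqref{Etest33} is also quick: $\delta_\Gamma(H,t)\le\delta_\Gamma(H)=E_\Gamma(H)-E_\emptyset(H)$, and testing $E_\Gamma(H)$ on the delta function $\Chi_{\{x_0\}}$ at some $x_0\in\Gamma^\cc$ (which exists as $\Gamma\subsetneq\Z^d$) yields $E_\Gamma(H)\le 2d+V(x_0)\le 2d+\sup_xV(x)$, whereas $H\ge(\inf_xV(x))I$ gives $E_\emptyset(H)\ge\inf_xV(x)$; subtracting gives $\delta_\Gamma(H)\le 2d+\spr(V)$.

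It remains to prove the lower bound, and for this I would give a direct variational estimate, avoiding Weyl sequences and resolvent/Feshbach identities. Put $E_0=E_\emptyset(H)$, so $\delta_\Gamma(H,t)=\inf_{\norm{\psi}=1}\la\psi,(H(t)-E_0)\psi\ra$ with $H(t)=H+t\Chi_\Gamma$. Fix a normalized $\psi$ and decompose $\psi=\phi+v$ with $\phi=\Chi_{\Gamma^\cc}\psi$, $v=\Chi_\Gamma\psi$, so $a:=\norm{v}^2\in[0,1]$ and $\norm{\phi}^2=1-a$; since $\Chi_\Gamma\phi=0$,
\[
\la\psi,(H(t)-E_0)\psi\ra=\la\phi,(H-E_0)\phi\ra+2\Re\la\phi,(-\Delta)v\ra+\la v,(H-E_0)v\ra+ta .
\]
Three facts drive the estimate. (i) Since $\phi\in\ell^2(\Gamma^\cc)$ one has $\la\phi,(H-E_0)\phi\ra=\la\phi,H_\Gamma\phi\ra-E_0(1-a)\ge\big(E_\Gamma(H)-E_0\big)(1-a)=\delta_\Gamma(H)(1-a)$. (ii) Since $0\le H-E_0$ and $H-E_0\le\big(4d+\spr(V)\big)I$ (using $-\Delta\le 4dI$ and $V-E_0\le\spr(V)I$, the latter because $E_0\ge\inf_xV(x)$), one has $0\le\la v,(H-E_0)v\ra\le\big(4d+\spr(V)\big)a$. (iii) As $V$ is diagonal and $\phi,v$ have disjoint supports, $\la\phi,Vv\ra=0=\la\phi,v\ra$, hence $\la\phi,(-\Delta)v\ra=\la\phi,(H-E_0)v\ra$, so by Cauchy--Schwarz for the nonnegative quadratic form of $H-E_0$ and then Young's inequality, for every $\lambda\in(0,1)$,
\[
2\,\abs{\Re\la\phi,(-\Delta)v\ra}\ \le\ \lambda\,\la\phi,(H-E_0)\phi\ra+\lambda^{-1}\la v,(H-E_0)v\ra .
\]
Feeding (iii) into the identity and using (i) with $1-\lambda>0$, (ii) with $1-\lambda^{-1}<0$, gives
\[
\la\psi,(H(t)-E_0)\psi\ra\ \ge\ (1-\lambda)\,\delta_\Gamma(H)\,(1-a)+(1-\lambda^{-1})\big(4d+\spr(V)\big)a+ta .
\]

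Taking the infimum over normalized $\psi$, hence over $a\in[0,1]$, I would now choose $\lambda=1-s$ with $s\in[0,1)$ the smaller root of $s^2\,\delta_\Gamma(H)-s\big(\delta_\Gamma(H)+4d+\spr(V)+t\big)+t=0$, which makes the coefficient of $a$ on the right-hand side vanish; then $\delta_\Gamma(H,t)\ge s\,\delta_\Gamma(H)$. From the quadratic, $s\big(\delta_\Gamma(H)+4d+\spr(V)+t\big)=t+s^2\delta_\Gamma(H)\ge t$, so $s\ge t/\big(\delta_\Gamma(H)+4d+\spr(V)+t\big)\ge t/\big(t+6d+2\spr(V)\big)$ by the bound $\delta_\Gamma(H)\le 2d+\spr(V)$ proved above; this is exactly the lower bound in \eqref{Etest33}. (The cases $t=0$ and $\delta_\Gamma(H)=0$ are trivial, both sides being $0$ or $\delta_\Gamma(H,t)\ge0$.)

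The main obstacle is getting the coefficient of $t$ in the denominator to be $1$ rather than something larger. The naive cross-term bound $\abs{\la\phi,(-\Delta)v\ra}\le\norm{-\Delta}\,\norm{\phi}\,\norm{v}\le 2d\sqrt{a(1-a)}$ is too crude: it only delivers estimates of the shape $\delta_\Gamma(H,t)\ge\delta_\Gamma(H)-O\!\big(d^2/(t-\delta_\Gamma(H))\big)$, which are vacuous when $\delta_\Gamma(H)$ is small and never give a bound genuinely proportional to $\delta_\Gamma(H)$. What rescues the argument is the $(H-E_0)$-weighted Cauchy--Schwarz in (iii): the quantity $\la\phi,(H-E_0)\phi\ra$ can be as small as $\delta_\Gamma(H)(1-a)$ (it is tiny exactly when $\phi$ is close to a ground state of $H_\Gamma$), not of order $d^2(1-a)$, and it is this that makes the final bound scale linearly in $\delta_\Gamma(H)$ with the sharp constant. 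The rest is elementary: one checks that the quadratic in $s$ has discriminant $(\delta_\Gamma(H)-t)^2+\big(4d+\spr(V)\big)\big(4d+\spr(V)+2\delta_\Gamma(H)+2t\big)\ge0$, that $s=1$ is never a root, and that its smaller root lies in $[0,1)$ (with $\lambda\in(0,1)$ when $t>0$), so the choice of $\lambda$ is legitimate.
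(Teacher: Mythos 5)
Your proof is correct, and it takes a genuinely different route from the paper. The paper proves the lower bound in \eq{Etest33} via Schur complements (Feshbach maps) with respect to the decomposition $\ell^2(\Z^d)=\ell^2(\Gamma)\oplus\ell^2(\Gamma^\cc)$: it establishes $H_{\Gamma^\cc}\ge u(H_\Gamma+\nu)^{-1}u^*$ from $(H+\nu)^{-1}\le\nu^{-1}$, combines this with the resolvent inequality $(H_\Gamma-\eta)^{-1}\le(1+\tfrac{\eta+\nu}{E_\Gamma-\eta})(H_\Gamma+\nu)^{-1}$, and then invokes the Schur complement criterion for positive definiteness to get \eq{eq:intermbnd}. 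You instead run a direct variational estimate, decomposing a normalized trial state $\psi=\phi+v$ along $\Gamma^\cc$ and $\Gamma$, controlling the cross term by the $(H-E_0)$-weighted Cauchy--Schwarz inequality plus Young's inequality, and then choosing the Young parameter so that the coefficient of $a=\norm{v}^2$ vanishes. The two arguments are in a sense dual to each other (the Schur positivity criterion is precisely the statement that the variational form in a block decomposition is nonnegative), and indeed you land on the same intermediate bound $\delta_\Gamma(H,t)\ge\frac{t\,\delta_\Gamma(H)}{t+4d+\delta_\Gamma(H)+\spr(V)}$ as \eq{eq:intermbnd}. What your route buys is elementarity: it uses no resolvents and no Schur complement machinery, only the positivity of $H-E_0$ and the spectral bounds $0\le-\Delta\le 4d$. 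The key insight you correctly identify --- that the naive $\abs{\la\phi,(-\Delta)v\ra}\le 2d\sqrt{a(1-a)}$ is too crude, and that the cross term must be weighted by $H-E_0$ to capture the linear-in-$\delta_\Gamma(H)$ scaling --- is exactly what the Schur complement does implicitly. Your upper bound in \eq{Etest33} and the deduction of $E_\Gamma(H,\infty)=E_\Gamma(H)$ and of the ``if and only if'' statement from the lower bound all match the paper's logic.
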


Theorem~\ref{thmEinftyEGamma} is proven in Section~\ref{secbottom}. Note that  once  we have a lower bound for $\delta_\Gamma(H) $, as in Theorem~\ref{thmEGammaInt},  \eq{Etest33} (we may use the sharper \eq{eq:intermbnd})  provides lower bounds for $\delta_\Gamma(H,t)$ for all $t> 0$.

Given $x \in \Z^d$ and $L>0$, we set
\beq\notag
\Lambda_L(x)=\set{y \in \Z^d: \; \norm{y-x}_\infty \le\tfrac L 2} \sqtx{and}\Lambda\up{0}_L(x)=\set{y \in \Z^d: \; \norm{y-x}_\infty <\tfrac L 2};
\eeq
note that \
$
\Lambda_L(x) = \Lambda\up{0}_L(x) \quad \Longleftrightarrow \quad L\notin  2 \N$.

Given   $K \in \N$ we have $\abs{\Lambda_K(x)}= K_\ast^d$, where $K_\ast=\begin{cases}    K \qtx{if $K$ is odd} \\
K+1 \qtx{if $K$  is even}
\end{cases}$.
Moreover  $\Lambda_K(x) = \Lambda\up{0}_K(x)$ if and only if $K$ is odd.

\begin{definition}   A   set $\Gamma \subset \Z^d$ is \emph{$(K,Q)$-relatively dense}, where  $K,Q \in\N$, if
\beq
\abs{\Gamma \cap \Lambda\up{0}_K(\zeta)} \ge Q \qtx{for all} \zeta \in  K\Z^d.
 \eeq
\end{definition}

By a relatively dense subset $\Gamma \subset \Z^d$ we will always mean a $(K,Q)$-relatively dense set $\Gamma$ for some appropriate  $K,Q  \in\N$.  Note that we must have  $Q \le  \abs{\Lambda\up{0}_K}\le K^d$, and  that
$K\ge 2$ unless $\Gamma = \Z^d$.

\begin{theorem}\label{thmEGammaInt} Let $\Gamma \subsetneq \Z^d$ be  $(K,Q)$-relatively dense, and let $H=-\Delta +V$ on $\ell^2(\Z^d)$, where $V$ is a bounded potential. Then
\beq\label{delgambndInt}
 \delta_\Gamma(H,t) \ge   \frac Q{\pa{ 2dK-1}{Y_{d,V}^{2dK-1}}}\pa{1-\pa{ \frac {Y_{d,V}}{Y_{d,V}+t}}^{2dK-1}}\qtx{for all} t\ge 0.
 \eeq
  As a consequence, we get
\beq\label{EGammaboundVInt}
 \delta_\Gamma(H)\ge  \frac Q{\pa{ 2dK-1}{Y_{d,V}^{2dK-1}}}>0 .
\eeq

In the special case $H=-\Delta$ we can improve the previous bound to
\beq\label{EGammaboundInt}
\delta_\Gamma(-\Delta) = E_{\Gamma}(-\Delta) \ge \frac 1 {4dK_\ast^{2d}}\, .
\eeq
\end{theorem}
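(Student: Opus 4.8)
The plan is to prove the quantitative lower bound \eqref{delgambndInt} on $\delta_\Gamma(H,t)$ for every finite $t\ge0$; the consequence \eqref{EGammaboundVInt} is then immediate on letting $t\to\infty$, since the right-hand side of \eqref{delgambndInt} increases to $\tfrac{Q}{(2dK-1)Y_{d,V}^{2dK-1}}$ while $\delta_\Gamma(H,t)\uparrow\delta_\Gamma(H)$ by Theorem~\ref{thmEinftyEGamma}. (Conversely, once \eqref{EGammaboundVInt} is known, \eqref{delgambndInt} can also be recovered by inserting it into the refinement of \eqref{Etest33} mentioned in the remark following Theorem~\ref{thmEinftyEGamma}.) The sharper bound \eqref{EGammaboundInt} will drop out of the same argument in the special case $V\equiv0$, $t=\infty$, where all weights introduced below equal $1$.

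I would approach \eqref{delgambndInt} via a ground state transform. Put $E=E_\emptyset(H)$ and record the elementary two-sided bound $\inf V\le E\le2d+\inf V$ (lower bound because $-\Delta\ge0$, upper bound by testing $H$ on a single $\delta$-function), so that $U(x):=2d+V(x)-E$ satisfies $0\le U(x)\le2d+\spr(V)=Y_{d,V}-1$ for all $x$. Since $H-E\ge0$, it admits a strictly positive generalized ground state $\psi_0>0$ with $(H-E)\psi_0=0$, i.e. $\sum_{\norm{y-x}=1}\psi_0(y)=U(x)\psi_0(x)$ for every $x$; if one prefers to avoid generalized eigenfunctions, one runs the whole argument with $H$ replaced by its restriction to a large box (where Perron--Frobenius provides such a $\psi_0$) and passes to the limit at the end. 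The equation forces $\psi_0(y)/\psi_0(x)\le U(x)<Y_{d,V}$ across every edge, so $\psi_0$ changes by at most a factor $Y_{d,V}$ per lattice step and is therefore comparable over any box $\Lambda\up{0}_K(\zeta)$, and over unions of a bounded number of adjacent such boxes, with ratios bounded by a power of $Y_{d,V}$ linear in $dK$. Writing $\phi=\psi_0 g$ converts the relevant quadratic forms into $\langle\phi,(H-E)\phi\rangle=\sum_{\{x,y\}}\psi_0(x)\psi_0(y)(g(x)-g(y))^2$ (sum over edges of $\Z^d$), $\norm{\Chi_\Gamma\phi}^2=\sum_{x\in\Gamma}\psi_0(x)^2g(x)^2$, and $\norm{\phi}^2=\sum_x\psi_0(x)^2g(x)^2$, so that $\delta_\Gamma(H,t)$ equals the infimum over $g\not\equiv0$ of the ratio of $\sum_{\{x,y\}}\psi_0(x)\psi_0(y)(g(x)-g(y))^2+t\sum_{x\in\Gamma}\psi_0(x)^2g(x)^2$ to $\sum_x\psi_0(x)^2g(x)^2$; for \eqref{EGammaboundVInt} one needs only the case $t=\infty$, i.e. $g$ vanishing on $\Gamma$.

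Thus \eqref{delgambndInt} reduces to a weighted discrete Friedrichs/Robin-type inequality on $\Z^d$, which I would prove box by box over the cover of $\Z^d$ by $\{\Lambda\up{0}_K(\zeta):\zeta\in K\Z^d\}$. Each such box contains at least $Q$ points of $\Gamma$; for a vertex $x$ and such a point $\gamma$ in the box, one expands $g(x)-g(\gamma)$ along a path of length $\le d(K-1)$ inside a bounded neighborhood of the box, applies Cauchy--Schwarz, and controls the weight ratios $\psi_0(x)^2/(\psi_0(a)\psi_0(b))$ by $Y_{d,V}^{2dK-1}$ -- the exponent being essentially $2dK$ rather than $dK$ because one vertex weight $\psi_0^2$ and one edge weight $\psi_0\psi_0$ are rescaled simultaneously. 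Combined with the penalty term $t\,\psi_0(\gamma)^2g(\gamma)^2$, a one-dimensional optimization along such a path of length $2dK-1$ produces a factor of the form $1-\bigl(Y_{d,V}/(Y_{d,V}+t)\bigr)^{2dK-1}$, which equals $(2dK-1)Y_{d,V}^{2dK-1}\int_0^t(Y_{d,V}+u)^{-2dK}\,du$; equivalently \eqref{delgambndInt} asserts $\delta_\Gamma(H,t)\ge Q\int_0^t(Y_{d,V}+u)^{-2dK}\,du$, consistently with $\tfrac{d}{dt}E_\Gamma(H,t)=\norm{\Chi_\Gamma\psi_t}^2$ for the ground state $\psi_t$ of $H(t)$. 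Summing the local estimates over $\zeta\in K\Z^d$, each edge being reused only boundedly often, then yields \eqref{delgambndInt}. I expect the genuinely delicate part to be exactly this accounting: pinning down the exponent $2dK-1$, the denominator $(2dK-1)Y_{d,V}^{2dK-1}$, the combinatorial factor $Q$, and the precise $t$-dependence requires careful bookkeeping of path lengths, of the multiplicities with which edges are used by the chosen paths and by overlapping boxes, and of how the penalty term interacts with the path expansion; the purely geometric ingredients -- relative denseness and comparability of $\psi_0$ over boxes -- are comparatively soft.

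Finally, for \eqref{EGammaboundInt} the potential is absent, so $E=0$, $U\equiv2d$, and $\psi_0\equiv1$: the ground state transform is trivial and all weights equal $1$. The inequality then becomes the unweighted Friedrichs inequality for functions vanishing on the $(K,Q)$-relatively dense set $\Gamma$ (for which $Q\ge1$ already suffices), and the box-wise argument above now loses only the polynomial factor $K_\ast^{2d}$ (roughly, path length times box cardinality) instead of the exponential $Y_{d,V}^{2dK-1}$, which gives $\delta_\Gamma(-\Delta)=E_\Gamma(-\Delta)\ge\tfrac1{4dK_\ast^{2d}}$.
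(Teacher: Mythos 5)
Your proposal diverges from the paper's in both halves of the theorem, and in both cases the divergence matters.

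For \eqref{delgambndInt} the paper does not run a variational/Doob-transform argument at all. Instead it works on a finite box $\La_R$, uses Perron--Frobenius to produce a strictly positive finite-volume ground state $\psi\up{\La}_{g,t}$ together with the quantitative Harnack-type bound of Lemma~\ref{lemgraph}, and then applies first-order perturbation theory (Hellmann--Feynman): $\tfrac{\di}{\di t}E\up{\La}(t)=\tr P_g(t)\Chi_\Gamma = \norm{\Chi_\Gamma\psi\up{\La}_{g,t}}^2$. Lemma~\ref{lemgraph} then gives $\tfrac{\di}{\di t}E\up{\La}(t)\ge Q(Y_{d,V}+t)^{-2dK}$, and integrating from $0$ to $t$ and letting $R\to\infty$ produces exactly \eqref{delgambndInt}. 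You in fact write down the correct mechanism in your parenthetical remark --- $\delta_\Gamma(H,t)\ge Q\int_0^t(Y_{d,V}+u)^{-2dK}\,\di u$ and $\tfrac{\di}{\di t}E_\Gamma(H,t)=\norm{\Chi_\Gamma\psi_t}^2$ --- but you relegate it to a ``consistency check'' and instead pursue a weighted Friedrichs inequality by Cauchy--Schwarz along paths, whose ``delicate accounting'' you never carry out. That accounting is the whole problem: a direct path-expansion does not obviously reproduce the precise prefactor $Q/((2dK-1)Y_{d,V}^{2dK-1})$ or the $t$-dependence $1-(Y_{d,V}/(Y_{d,V}+t))^{2dK-1}$, and you acknowledge you have not done it. The Hellmann--Feynman route sidesteps the whole edge-multiplicity bookkeeping because the derivative of the ground state energy is exactly a probability weight, with no Cauchy--Schwarz losses.

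For \eqref{EGammaboundInt} the gap is more concrete. The paper proves $E_\Gamma(-\Delta)\ge\tfrac{1}{4dK_\ast^{2d}}$ via a discrete Cheeger inequality in the style of Lawler--Sokal: the isoperimetric estimate of Lemma~\ref{lembdrybound} gives $\beta(\Gamma)\ge K_\ast^{-d}$ and the co-area/Cauchy--Schwarz machinery of Section~\ref{sec:isoperim} then yields $E_\Gamma(-\Delta)\ge\beta(\Gamma)^2/(4d)$. Your proposed unweighted Friedrichs inequality via path expansion is, as the paper itself observes in the discussion following Theorem~\ref{thmEGammaInt}, essentially the Bourgain--Kenig/Rojas-Molina argument, which yields the weaker $E_\Gamma(-\Delta)\ge C/K^{2d+2}$. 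Your back-of-envelope ``path length times box cardinality'' count also does not equal $K_\ast^{2d}$ (it is closer to $dK_\ast^{d+1}$), so even your own heuristic does not land on the stated constant. Asserting that the Friedrichs route ``gives $\delta_\Gamma(-\Delta)\ge\tfrac1{4dK_\ast^{2d}}$'' is therefore unjustified; to get that constant you genuinely need the Cheeger/isoperimetric argument.
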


We prove  \eq{delgambndInt} from a `quantitative unique continuation principle for ground states' given in Lemma~\ref{lemgraph}.  The lower bound given in \eq{EGammaboundVInt} holds for arbitrary bounded potential $V$; note  that it depends on $V$ only through  ${\spr}(V)$.

The special case 
 \eq{EGammaboundInt} follows from a  Cheeger's inequality.  We remark that $E_\Gamma(-\Delta)$ can also be estimated by an argument of Bourgain and Kenig \cite[Section~4]{BK} (see also \cite[Remark~4.4]{GKloc}). They treated  continuum models, but Rojas-Molina \cite[Section~1.2.5]{R}, \cite[Lemma~2.1]{R2} noted that the argument applies also to the discrete case.  This argument  yields   the bound  $E_{\Gamma}(-\Delta) \ge \frac C {K^{2d+2}}$.

  Theorem~\ref{thmEGammaInt} has a continuum counterpart; in particular we can use Cheeger's inequality to obtain the continuum version of \eqref{EGammaboundInt}. We did not include it in this paper because the continuous version of the general estimate \eqref{delgambndInt} is only marginally better than the estimate in  \cite[Lemma~4.2]{Kuc}, and the  continuous version of \eqref{EGammaboundInt} is similary only marginally better  that what we get with the Bourgain-Kenig argument.

Theorem~\ref{thmEGammaInt}  follows from  Theorems~\ref{thmEGammaV} and \ref{thmEGammaDelta} in Section~\ref{seclbgs}.

\subsection{Wegner estimates and localization for trimmed Anderson models}  If $\zeta \in \Z^d$, we will use the notation   $\Chi_\zeta$ for $\Chi_{\set{\zeta}}$  as a multiplication operator in $\ell^2(\Z^d)$,  but we will write $\delta_\zeta$ instead  when considering $\Chi_{\set{\zeta}}$ as a function in $\ell^2(\Z^d)$.

\subsubsection{Trimmed  Anderson models} 

\begin{definition}  \label{defDilAndHam} Let $\Gamma \subset \Z^d$ be  $(K,Q)$-relatively dense.   A \emph{$\Gamma$-trimmed  Anderson model} is a  discrete random Schr\"odinger  operator on on $\ell^2(\Z^d)$
 of the form
\beq\label{AndH}
H_{\bom,\lambda}: =  H_0+
\lambda V_{\bom} ,
\eeq
where:

\begin{enumerate}
\item  $H_0 = -\Delta + V\up{0}$,  with   $ V\up{0}$ a bounded   (background) potential.
\item  $V_{\bom}$ is the random potential given by
\beq
V_{\bom} :=
\sum_{\zeta \in \Gamma} \omega_\zeta   \Chi_\zeta ,\label{AndVK}
\eeq
where  $\bom=\{ \omega_\zeta \}_{\zeta \in
\Gamma}$ is a family of independent
 random
variables  whose  probability
distributions $\{ \mu_\zeta\}_{\zeta \in
\Gamma}$ are non\--dege\-nerate with
\beq \label{mu}
 \supp \mu_\zeta \subset   [0,M ]  \qtx{for all} \zeta \in \Z^d.
\eeq

\item $\lambda >0$ is the disorder parameter.

\end{enumerate}

\end{definition}

If $\Gamma=\Z^d$, $ V\up{0}=0$, and       $\mu_\zeta=\mu$ for all $\zeta \in \Z^d$, then $H_{\bom,\lambda}$ is the standard   Anderson model.   This model was introduced by Anderson  \cite{And} to study the effect of disorder on electronic states within the suitable energy range.  The  main phenomenon  is localization,  which manifests itself as spectral localization (the spectral measure of $H_{\bom,\lambda}$ is almost surely pure point with  exponential decay of eigenfunctions) and as dynamical localization (non-spreading of wave packets).

 Trimmed Anderson models are  the discrete analogues of the crooked Anderson Hamiltonians introduced in \cite[Definition~1.2]{Kuc}. (By a trimmed Anderson model we will always mean a $\Gamma$-trimmed  Anderson model for some relatively dense subset $\Gamma \subset \Z^d$.)

 The standard  Anderson model with sufficiently regular single site probability distribution $\mu$  was intensively studied during the last two decades; see \cite{A,AM,ASFH,DLS,vDK,FK,FMSS,FS,Klweak,SW,W} and the reviews \cite{GKicmp,Ki,S} for a more exhaustive list of  references. (In this paper we consider only results valid in  arbitrary dimension $d$; the $d=1$ case is special and we will not mention $d=1$ only results.) It exhibits localization in an interval at the bottom of the spectrum for fixed disorder and on the whole real line  for large disorder.   On the other hand, until very recently there had been 
no localization results for  ergodic $\Gamma$- trimmed Anderson models with $\Gamma \ne \Z^d$, say $\Gamma= K\Z^d$ with $K \ge 2$.
  The reason is the lack of a covering condition, i.e., that the support of the random potential is  all of $\Z^d$ with probability one.  Indeed, $\sum_{\zeta  \in K \Z^d}   \Chi_\zeta =\Chi_{\Gamma}$,  and hence
$\sum_{\zeta \in K \Z^d}   \Chi_\zeta \ge c> 0$ if and only if $\Gamma = \Z^d$.   The covering condition has   played  a crucial role in the proofs of Wegner estimates (which are bounds on the regularity of the integrated density of states, first proved by Wegner \cite{Weg} for the standard Anderson model) and  localization for the Anderson model.

This difficulty has been overcome for the continuous analogue of the Anderson model by the use of   the unique continuation principle for continuous Schr\"odinger operators, and  localization at the bottom of the spectrum has been  proved  for continuous Anderson Hamiltonians  \cite{BK,CHK1,CHK2,GKloc}.  These results were further  extended to a larger  class of  continuous  random  Schr\" odinger operators with alloy-type random potentials, including non-ergodic random  Schr\" odinger operators such as Delone-Anderson Hamiltonians \cite{Kuc,R,RV}.

Recently, Rojas-Molina  \cite[Theorem~1.2.6]{R} proved Wegner estimates and localization at the bottom of the spectrum for the  special case of $2\Z^d$-trimmed Anderson models with no background potential, i.e.,  $V\up{0}=0$. She circumvented  the lack of covering condition  using  an argument of Bourgain and Kenig \cite[Section~4]{BK}  as described  in  \cite[Remark~4.4]{GKloc}.   Her approach can be extended for   $\Gamma$-trimmed Anderson models with $\Gamma$ an arbitrary relatively dense subset of $\Z^d$, as long as there is no background potential \cite[Section~2.1]{R2};   the Bourgain-Kenig argument does not appear to be able to incorporate a background potential.
 Cao and Elgart  \cite{CE} showed localization at small disorder below  the bottom of the free  spectrum for a class of  three-dimensional Anderson-like models  without background potential. The random variables in \cite{CE} are supported on the interval $[-1,1]$ which in our context corresponds to a particular type of $\Gamma$-trimmed Anderson models.

Although there is no unique continuation principle for discrete Schr\"odinger operators,   we prove  Wegner estimates and localization at the bottom of the spectrum for $\Gamma$-trimmed Anderson models with  nontrivial background potentials.
We are not aware of any previous results on either Wegner estimates or localization for this class of models.

\subsubsection{The ground state energy}
A trimmed Anderson model $H_{\bom,\lambda}$ is a $K\Z^d$-ergodic random Schr\"odinger operator if and only if 
 $\Gamma=\Gamma + \zeta$ for all $\zeta \in KZ^d$,  $ V\up{0}$ is a periodic potential with period $K$, and       $\mu_\zeta=\mu$ for all $\zeta \in \Gamma$.  In this case its spectrum $\sigma(H_{\bom,\lambda})$ is not random, i.e., it is the same with probability one.  
  In particular, requiring $0=\inf \supp \mu$, we get 
  \beq\label{E0prob}
   E_\emptyset(H_{\bom,\lambda})=E_\emptyset(H_0) \mbox{ with probability one}.
   \eeq

Since  a trimmed  Anderson model $H_{\bom,\lambda}$ is not, in general, an ergodic random operator, its spectrum $\sigma(H_{\bom,\lambda})$ is a random set. We have  $E_\emptyset (H_{\bom,\lambda}) \ge E_\emptyset (H_0)$ for all $\bom\in [0,M]^\Gamma$ and $\lambda>0$.
But even  after imposing $\mu_\zeta=\mu$ for all $\zeta \in \Gamma$ with $0=\inf \supp \mu$ we cannot guarantee \eqref{E0prob}.
For example, take $V\up{0}= - 6d \Chi_{\zeta_{0}}$ for some $\zeta_{0}\in \Gamma$, $\mu$ uniformly distributed on $[0,1]$, and $\lambda > 6d$.  Then $E_\emptyset(H_0)\le \scal{\delta_{\zeta_{0}}, H_0 \delta_{\zeta_{0}}}= - 4d   $, but we clearly have 
$\P \set{E_\emptyset(H_{\bom,\lambda})\ge 0} >0$, so \eq{E0prob} is not true.   But if in addition we require $ V\up{0}$ to be a periodic potential with period $K$,  it  follows that  \eq{E0prob} holds by comparison with the ergodic random operator we obtain by removing the $\Gamma$-trimming, that is, replacing   $\Gamma$ by $\Z^d$.
Actually, \eq{E0prob} holds in a broader context as the following proposition will show. (See also \cite{R2}.)

Given a Schr\"odinger operator $H$ on $\ell^2(\Z^d)$, we define finite volume operators $H\up{\La}= H_{{\La}^{\cc}}$, i.e., the restriction of $\Chi_\La H \Chi_\La$ to $\ell^2(\La)$ where  $\La=\Lambda_L(x)$ is a finite box.  In particular,
given a a trimmed Anderson model  $H_{\bom,\lambda}$,  we define finite volume random operators  $H_{\bom,\lambda}\up{\La}$. We also set
 $S_\La(t) := \max_{\zeta\in \Gamma \cap \La}    S_{\mu_\zeta}(t) $ for $t\ge 0$, where $S_{\mu}(t):=  \sup_{a \in \R} \mu([a,a+t]) $ denotes  the concentration function of the probability measure $\mu$, and let   $S(t) := \sup_{\zeta\in \Gamma }    S_{\mu_\zeta}(t) $ for $t\ge 0$

\begin{proposition} \label{prop:admfam} Let $H_{\bom,\lambda}$ be a $\Gamma$-trimmed  Anderson model with  $\mu_\zeta=\mu$ for all $\zeta \in \Gamma$ with $0=\inf \supp \mu$.  Suppose  for any $\epsilon>0$ there is $L=L(\epsilon)>0$ such that
\beq\label{eq:condprob}
\abs{\set{x\in\Z^d: E_\emptyset(H_{0}\up{\La_L(x)})\le E_\emptyset + \epsilon}}=\infty.
\eeq
Then  $E_\emptyset(H_0)$ is in the essential spectrum of $H_0$ and   $  E_\emptyset(H_{\bom,\lambda})=E_\emptyset(H_0)$ with probability one.
\end{proposition}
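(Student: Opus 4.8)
The plan is to prove both assertions from a single construction: for every $\epsilon>0$ I would exhibit approximate ground states supported in far apart finite boxes, and then combine them via the min--max principle for the statement about the essential spectrum, and via the second Borel--Cantelli lemma for the probabilistic statement. Throughout I write $E_\emptyset=E_\emptyset(H_0)$. Two elementary facts will be used repeatedly. First, if $\Lambda=\Lambda_L(x)$ is a finite box and $\phi\in\ell^2(\Lambda)$, then, regarding $\phi$ as an element of $\ell^2(\Z^d)$ by extension by zero, $\langle\phi,H_0\phi\rangle=\langle\phi,\Chi_\Lambda H_0\Chi_\Lambda\phi\rangle=\langle\phi,H_0\up{\Lambda}\phi\rangle$ since $\Chi_\Lambda\phi=\phi$; in particular the normalized ground state $\psi$ of $H_0\up{\Lambda}$, extended by zero, satisfies $\langle\psi,H_0\psi\rangle=E_\emptyset(H_0\up{\Lambda})$. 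Second, any infinite subset $S\subset\Z^d$ is unbounded, so a greedy selection (discarding at each step a finite neighborhood of the centers already chosen) produces $x_1,x_2,\dots\in S$ with pairwise $\norm{\cdot}_\infty$-distance $>L+2$, which makes the boxes $\Lambda_L(x_n)$ --- and even their $1$-neighborhoods --- pairwise disjoint.

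For $E_\emptyset\in\sigma_{\mathrm{ess}}(H_0)$: fix $\epsilon>0$, take $L=L(\epsilon)$ and the infinite set $S=\{x:E_\emptyset(H_0\up{\Lambda_L(x)})\le E_\emptyset+\epsilon\}$ from \eqref{eq:condprob}, extract disjoint good boxes $\Lambda_L(x_n)$, and let $\psi_n$ be the normalized ground state of $H_0\up{\Lambda_L(x_n)}$ extended by zero. The $\psi_n$ are orthonormal, and since $\supp(H_0\psi_n)$ lies in the $1$-neighborhood of $\Lambda_L(x_n)$ one has $\langle\psi_m,H_0\psi_n\rangle=0$ for $m\ne n$; hence $\langle\psi,H_0\psi\rangle=\sum_n\abs{c_n}^2\langle\psi_n,H_0\psi_n\rangle\le(E_\emptyset+\epsilon)\norm{\psi}^2$ for every $\psi=\sum_{n=1}^N c_n\psi_n$. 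By min--max the spectral projection $\Chi_{(-\infty,E_\emptyset+\epsilon]}(H_0)$ has rank $\ge N$ for all $N$, hence infinite rank; since this holds for all $\epsilon>0$ and $\sigma(H_0)\subset[E_\emptyset,\infty)$, $E_\emptyset$ is not an isolated eigenvalue of finite multiplicity, i.e. $E_\emptyset\in\sigma_{\mathrm{ess}}(H_0)$.

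For the probabilistic claim, recall $E_\emptyset(H_{\bom,\lambda})\ge E_\emptyset$ always, so it suffices to show $\P\{E_\emptyset(H_{\bom,\lambda})\le E_\emptyset+\epsilon\}=1$ for each $\epsilon>0$ and then intersect over $\epsilon=1/k$, $k\in\N$. Fix $\epsilon>0$, apply \eqref{eq:condprob} with $\epsilon/2$, extract disjoint good boxes $\Lambda_n:=\Lambda_L(x_n)$ with ground states $\psi_n$ as above, and set $\delta:=\epsilon/(2\lambda)$ and $A_n:=\{\omega_\zeta\le\delta \text{ for all }\zeta\in\Gamma\cap\Lambda_n\}$. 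On $A_n$,
\[
\langle\psi_n,H_{\bom,\lambda}\psi_n\rangle
=E_\emptyset(H_0\up{\Lambda_n})+\lambda\sum_{\zeta\in\Gamma\cap\Lambda_n}\omega_\zeta\,\abs{\psi_n(\zeta)}^2
\le E_\emptyset+\tfrac{\epsilon}{2}+\lambda\delta=E_\emptyset+\epsilon,
\]
using $\sum_\zeta\abs{\psi_n(\zeta)}^2=1$, so $E_\emptyset(H_{\bom,\lambda})\le E_\emptyset+\epsilon$ on $A_n$. Since $0=\inf\supp\mu$, $p:=\mu([0,\delta])>0$, and since $\abs{\Gamma\cap\Lambda_n}\le\abs{\Lambda_n}=L_\ast^d$ and $\mu_\zeta=\mu$, $\P(A_n)\ge p^{L_\ast^d}>0$ uniformly in $n$. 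The $A_n$ depend on disjoint blocks of the independent variables $\{\omega_\zeta\}$, hence are independent with $\sum_n\P(A_n)=\infty$; by the second Borel--Cantelli lemma $\P(\bigcup_n A_n)=1$, giving $\P\{E_\emptyset(H_{\bom,\lambda})\le E_\emptyset+\epsilon\}=1$. Intersecting over $\epsilon=1/k$ yields $E_\emptyset(H_{\bom,\lambda})\le E_\emptyset$ a.s., and with the reverse bound, $E_\emptyset(H_{\bom,\lambda})=E_\emptyset(H_0)$ a.s.

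I do not expect a serious obstacle here; the delicate points are the geometric bookkeeping --- choosing the centers $x_n$ far enough apart that the cross terms $\langle\psi_m,H_0\psi_n\rangle$ vanish and that the blocks of random variables entering distinct events $A_n$ are disjoint --- and the elementary observation $\mu([0,\delta])>0$, which is exactly where the assumptions $0=\inf\supp\mu$ and \eqref{eq:condprob} enter.
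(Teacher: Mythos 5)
Your proof is correct and follows essentially the same approach as the paper: extract far-apart boxes on which the local ground state energy is within $\epsilon$ of $E_\emptyset$, extend the local ground states by zero to get orthonormal trial functions, and use independence of the random variables over disjoint boxes to conclude that with probability one the random potential is uniformly small on some such box. The only minor technical divergence is in the essential-spectrum step, where you argue via min--max (the spectral projection $\Chi_{(-\infty,E_\emptyset+\epsilon]}(H_0)$ has infinite rank for every $\epsilon>0$) rather than the paper's invocation of Weyl's criterion; both routes are standard and give the same conclusion.
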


The proof is given  in Section~\ref{secbottomsp}.

\subsubsection{ Wegner estimates and localization}
We  prove Wegner estimates and localization for $\Gamma$-trimmed  Anderson models in intervals $[E_\emptyset(H_0),E_1]\subset [E_\emptyset(H_0),E_\Gamma(H_0)) $. 
Note  that we have \eq{E0prob}, and hence almost sure existence of the spectrum in these intervals, that is, 
\beq
\P\set{\sigma(H_{\bom,\lambda}) \cap [E_\emptyset(H_0),E_1]\ne \emptyset}=1 \qtx{for all} E_1 > E_\emptyset(H_0),
\eeq
for the  class of (generally) non-ergodic trimmed  Anderson models given in Proposition~\ref{prop:admfam}.

 \begin{theorem}\label{thmWegbottom} Let $H_{\bom,\lambda}$ be a $\Gamma$-trimmed  Anderson model.    Given  an energy $E_{1}\in (E_\emptyset(H_0), E_\Gamma(H_0))$,   set
 \beq\label{kappabottom}
\kappa = \kappa(H_0,\Gamma,E_1)= \sup_{s>0; \; E_\Gamma(H_0,s) >E_1} \frac {E_\Gamma(H_0,s) -E_{1}}s   >0.
\eeq
Then
 for every box   $\La=\La_L(x_0)$ with $x_0 \in \Z^d$ and $L>0$ we have
 \beq \label{PVPb}
\Chi_{(-\infty,E_{1}]}(H_{\bom,\lambda}\up{\La})\Chi_{\Gamma \cap \La}\Chi_{(-\infty,E_{1}]}(H_{\bom,\lambda}\up{\La}) \ge \kappa \Chi_{(-\infty,E_{1}]}(H_{\bom,\lambda}\up{\La}) \sqtx{for} \bom \in [0,M]^\Gamma,  
\eeq
and for any closed interval $I \subset (-\infty,E_{1}]$ we have
\begin{align}\label{preWegner}
\E\set{ \tr \Chi_{I}(H_{\bom,\lambda}\up{\La}) }   \le  8\kappa^{-1}S_\La(\lambda^{-1}\abs{I})\abs{\Gamma \cap \La}.
    \end{align}
\end{theorem}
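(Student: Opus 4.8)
The plan is to first establish the operator inequality \eqref{PVPb}, which is the finite-volume incarnation of the ground-state lifting, and then to feed it into a standard spectral-averaging argument to get the Wegner-type bound \eqref{preWegner}. For \eqref{PVPb}, I would work on the box $\La$ and note that $H_{\bom,\lambda}\up{\La} = H_0\up{\La} + \lambda \sum_{\zeta \in \Gamma\cap\La}\omega_\zeta\Chi_\zeta$ with all $\omega_\zeta \in [0,M]$, so in particular $\lambda\sum_{\zeta\in\Gamma\cap\La}\omega_\zeta\Chi_\zeta \le \lambda M \Chi_{\Gamma\cap\La}$ and, more importantly for the lower bound, pushing all the $\omega_\zeta$ to $0$ only lowers the operator. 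The key point is to compare $H_{\bom,\lambda}\up{\La}$ with the trimmed operator $(H_0)_\Gamma\up{\La}$ (the $\Gamma$-trimming restricted to $\La$) and with the ``cost'' operator $H_0\up{\La} + s\,\Chi_{\Gamma\cap\La}$ whose ground state energy is $\ge E_\Gamma(H_0,s)$ by the monotonicity/min-max remarks preceding Theorem~\ref{thmEinftyEGamma} (restriction to a box only raises the bottom of the spectrum). Concretely: for any $s>0$ with $E_\Gamma(H_0,s) > E_1$, on the range of $P:=\Chi_{(-\infty,E_1]}(H_{\bom,\lambda}\up{\La})$ one has, for $\vphi \in \Ran P$,
\begin{align*}
E_1\norm{\vphi}^2 &\ge \scal{\vphi, H_{\bom,\lambda}\up{\La}\vphi} \ge \scal{\vphi, (H_0\up{\La} + s\,\Chi_{\Gamma\cap\La})\vphi} - s\scal{\vphi,\Chi_{\Gamma\cap\La}\vphi}\\
&\ge E_\Gamma(H_0,s)\norm{\vphi}^2 - s\scal{\vphi,\Chi_{\Gamma\cap\La}\vphi},
\end{align*}
which rearranges to $\scal{\vphi,\Chi_{\Gamma\cap\La}\vphi} \ge \frac{E_\Gamma(H_0,s)-E_1}{s}\norm{\vphi}^2$; taking the supremum over admissible $s$ gives $\scal{\vphi,\Chi_{\Gamma\cap\La}\vphi}\ge \kappa\norm{\vphi}^2$ for all $\vphi\in\Ran P$, which is exactly \eqref{PVPb}. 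One subtlety to check is that $\scal{\vphi, H_{\bom,\lambda}\up{\La}\vphi}\ge \scal{\vphi, H_0\up{\La}\vphi} + s\scal{\vphi,\Chi_{\Gamma\cap\La}\vphi} - s\scal{\vphi,\Chi_{\Gamma\cap\La}\vphi}$ needs $\lambda\omega_\zeta \ge 0$, which holds since $\omega_\zeta\ge 0$ and $\lambda>0$; one does \emph{not} need $\lambda\omega_\zeta \ge s$, because the term $s\Chi_{\Gamma\cap\La}$ is added and subtracted only as a comparison device. Also $\kappa>0$ because $E_\Gamma(H_0,s)\uparrow E_\Gamma(H_0) > E_1$ as $s\to\infty$ by Theorem~\ref{thmEinftyEGamma}, so some admissible $s$ exists.

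For \eqref{preWegner}, I would use \eqref{PVPb} together with the standard spectral-averaging estimate (the Combes--Germinet--Klein / Chebyshev-type bound). Writing $P_I:=\Chi_I(H_{\bom,\lambda}\up{\La})$ and noting $I\subset(-\infty,E_1]$ so $P_I \le P$, \eqref{PVPb} gives $P_I \le \kappa^{-1}P_I\Chi_{\Gamma\cap\La}P_I$ (after sandwiching), hence
\[
\tr P_I \le \kappa^{-1}\tr\pa{P_I\Chi_{\Gamma\cap\La}P_I} = \kappa^{-1}\sum_{\zeta\in\Gamma\cap\La}\tr\pa{P_I\Chi_\zeta P_I}.
\]
Taking expectations and fixing $\zeta$, I condition on $\bom_{\hat\zeta}:=\{\omega_\eta\}_{\eta\ne\zeta}$ and integrate only in $\omega_\zeta$. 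Since $H_{\bom,\lambda}\up{\La}$ depends on $\omega_\zeta$ via the rank-one term $\lambda\omega_\zeta\Chi_\zeta$ (which is $\Chi_\zeta$-monotone in $\omega_\zeta$ with coefficient $\lambda$), the spectral averaging lemma yields
\[
\E_{\omega_\zeta}\set{\tr\pa{\Chi_\zeta\,\Chi_I(H_{\bom,\lambda}\up{\La})\,\Chi_\zeta}} \le 8\,S_{\mu_\zeta}\pa{\lambda^{-1}\abs{I}},
\]
where the factor $\lambda^{-1}$ comes from the rescaling $\omega_\zeta \mapsto \lambda\omega_\zeta$ in the averaging, and the $8$ is the universal constant in the rank-one spectral averaging estimate valid for arbitrary probability measures. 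Summing over $\zeta\in\Gamma\cap\La$ and using $S_{\mu_\zeta}(t)\le S_\La(t)$ gives $\E\set{\tr\Chi_I(H_{\bom,\lambda}\up{\La})} \le 8\kappa^{-1}S_\La(\lambda^{-1}\abs{I})\abs{\Gamma\cap\La}$, as claimed.

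The main obstacle is \eqref{PVPb}: one must be careful that the comparison $H_{\bom,\lambda}\up{\La} \ge H_0\up{\La}$ on $\ell^2(\La)$ genuinely holds (it does, since $\lambda\omega_\zeta \ge 0$), and — more delicately — that $\inf\sigma(H_0\up{\La} + s\Chi_{\Gamma\cap\La}) \ge E_\Gamma(H_0,s) = E_\emptyset(H_0(s))$ where $H_0(s) = H_0 + s\Chi_\Gamma$ on the full $\ell^2(\Z^d)$; this is precisely the min-max inequality $E_\Gamma(H,t)\le E_\Gamma(H)$ applied to $H_0(s)$ restricted to the box (restriction raises the infimum of the spectrum), which is already recorded in the discussion around the display before Theorem~\ref{thmEinftyEGamma}. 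The remaining ingredient — the rank-one spectral averaging bound with constant $8$ for general (possibly singular) single-site measures — is classical and I would simply cite it; the only thing to verify is the bookkeeping of the $\lambda^{-1}$ scaling and that the conditional expectation over a single coordinate is legitimate because the $\omega_\zeta$ are independent.
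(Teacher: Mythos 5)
Your proof is correct and follows essentially the same route as the paper's. The paper reaches \eqref{PVPb} by first observing that $E_\Gamma(H_0^\La,t)\ge E_\Gamma(H_0,t)$ (so that $\kappa(H_0^\La,\Gamma,E_1)\ge\kappa$) and then citing \cite[Lemma~4.1]{Kuc} for the operator inequality; you instead unwind that lemma and give the short variational chain $E_1\norm{\vphi}^2\ge\scal{\vphi,H_{\bom,\lambda}^\La\vphi}\ge\scal{\vphi,(H_0^\La+s\Chi_{\Gamma\cap\La})\vphi}-s\scal{\vphi,\Chi_{\Gamma\cap\La}\vphi}\ge E_\Gamma(H_0,s)\norm{\vphi}^2-s\scal{\vphi,\Chi_{\Gamma\cap\La}\vphi}$ directly, using positivity of the random potential and the fact that Dirichlet restriction to a box raises the ground-state energy of $H_0(s)=H_0+s\Chi_\Gamma$. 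That is precisely the content of the cited lemma, so the two proofs agree; the deduction of \eqref{preWegner} via $\tr P_I\le\kappa^{-1}\sum_{\zeta\in\Gamma\cap\La}\scal{\delta_\zeta,P_I\delta_\zeta}$ and rank-one spectral averaging with constant $8$ also matches the paper's argument.
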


\begin{remark} \label{remkappa} It follows from \eqref{delgambndInt} (and its proof) that
\beq\label{eq:bndkappa}
\kappa \ge  \frac{Q} {2dK+1}\, \pa{(1+Z)Y_{d,V\up{0}}} ^{-2dK}
\eeq
where
\beq
Z= \frac {2Kd+1}{2Kd} \pa{ \pa{1-   { (E_1-E_\emptyset(H_0))Q^{-1}(2dK-1)Y_{d,V\up{0}}^{2dK-1}} }^{-\frac 1 { 2dK-1}}   -1}.
\eeq
(See  \eq{kappa1}-\eq{kappa5}.)
\end{remark}

Theorem \ref{thmWegbottom} is proved in Section~\ref{secwegner}.

{ The Wegner type estimate \eqref{preWegner} allows us to establish  localization  for $\Gamma$-trimmed  Anderson models at the bottom of the spectrum. By complete localization on an interval $I$ we mean that for all $E\in I$ there exists $\delta(E)>0$ such that  we can perform the bootstrap multiscale analysis on the interval $(E-\delta(E),E+\delta(E))$, obtaining Anderson and dynamical localization; see \cite{GKboot,GKfinvol,GKsudec}. (Note that by this definition  we always have complete localization in  $(-\infty, E_\emptyset (H_0))$.)

The following theorem show that we always have  localization below $E_\emptyset (H_0)$ at high disorder.

\begin{theorem} \label{comploclambda} Let  $H_{\bom,\lambda}$ be a $\Gamma$-trimmed Anderson model,    and suppose   $S(t)  \le C t^\theta$ for all  $t\ge 0$, where $\theta \in (0,1]$ and $C$ is a constant.  Then, given  $E_{1}\in (E_\emptyset (H_0), E_\Gamma (H_0))$,  there exists $\lambda(E_1) <\infty$  such that $H_{\bom,\lambda}$ exhibits complete localization on the interval $(-\infty, E_1)$ for all $\lambda \ge \lambda(E_1)$.
\end{theorem}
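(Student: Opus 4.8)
The plan is to verify the hypotheses of the Germinet--Klein bootstrap multiscale analysis (MSA) for $H_{\bom,\lambda}$ on the interval $(-\infty,E_1)$ and then invoke \cite{GKboot,GKfinvol,GKsudec} to obtain complete (Anderson and dynamical) localization there. The deterministic/structural inputs of the MSA --- independence at a distance (immediate here, since the $\omega_\zeta$ are independent), the geometric resolvent inequality, the eigenfunction decay inequality, and the crude a priori bound on the number of eigenvalues of $H_{\bom,\lambda}\up{\La}$ in a window --- hold for any discrete Schr\"odinger operator on a finite box and are insensitive to the lack of ergodicity. So the only two probabilistic statements to be established are a Wegner estimate valid throughout $(-\infty,E_1]$ and an initial length scale estimate at a single scale $L_0$, both uniform in the center of the box; these are exactly the two places where the hypothesis $S(t)\le Ct^\theta$ and the largeness of $\lambda$ enter.

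\emph{Wegner estimate.} Feeding $S_\La(\lambda^{-1}\abs{I})\le S(\lambda^{-1}\abs{I})\le C\lambda^{-\theta}\abs{I}^{\theta}$ and $\abs{\Gamma\cap\La}\le\abs{\La}\le(L+1)^d$ into \eqref{preWegner}, Theorem~\ref{thmWegbottom} yields, for every closed $I\subset(-\infty,E_1]$ and every box $\La=\La_L(x_0)$,
\[
\E\set{\tr\Chi_I(H_{\bom,\lambda}\up{\La})}\;\le\;8C\,\kappa^{-1}\,\lambda^{-\theta}\,\abs{I}^{\theta}\,(L+1)^d,
\]
where $\kappa=\kappa(H_0,\Gamma,E_1)>0$. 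This is precisely of the form the MSA demands (a fixed positive power of $\abs{I}$ times the volume), it is uniform in $x_0$, and it is non-increasing in $\lambda$.

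\emph{Initial length scale estimate.} I would combine the projection positivity bound \eqref{PVPb} with a ground state energy comparison. Fix a box $\La$, set $s=s(\lambda):=2\pa{E_1-E_\emptyset(H_0)}/(\lambda\kappa)$, and consider the event $\cA_\La:=\set{\omega_\zeta\ge s\sqtx{for all}\zeta\in\Gamma\cap\La}$. Suppose, on $\cA_\La$, that some normalized $\psi$ were an eigenfunction of $H_{\bom,\lambda}\up{\La}$ with eigenvalue $\le E_1$. Then $\psi\in\Ran\Chi_{(-\infty,E_1]}(H_{\bom,\lambda}\up{\La})$, so \eqref{PVPb} gives $\norm{\Chi_{\Gamma\cap\La}\psi}^2\ge\kappa$; on the other hand, writing $H_{\bom,\lambda}\up{\La}=H_0\up{\La}+\lambda V_\bom$ on $\ell^2(\La)$ and using $\scal{\psi,H_0\up{\La}\psi}\ge E_\emptyset(H_0)$ (compression of $H_0$) together with $\lambda\scal{\psi,V_\bom\psi}=\lambda\sum_{\zeta\in\Gamma\cap\La}\omega_\zeta\abs{\psi(\zeta)}^2\ge\lambda s\norm{\Chi_{\Gamma\cap\La}\psi}^2\ge\lambda s\kappa$, one gets $E_1\ge E_\emptyset(H_0)+\lambda s\kappa=2E_1-E_\emptyset(H_0)$, i.e. $E_1\le E_\emptyset(H_0)$, contradicting $E_1>E_\emptyset(H_0)$. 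Hence on $\cA_\La$ the operator $H_{\bom,\lambda}\up{\La}$ has no spectrum in $(-\infty,E_1]$, while a union bound gives
\[
\P(\cA_\La^{\cc})\;\le\;\abs{\Gamma\cap\La}\,S(s)\;\le\;(L+1)^d\,C\,\pa{\tfrac{2(E_1-E_\emptyset(H_0))}{\kappa}}^{\!\theta}\lambda^{-\theta}.
\]
On $\cA_\La$ every energy $E\le E_1-\eta$ lies at distance $\ge\eta$ from $\sigma(H_{\bom,\lambda}\up{\La})$, so the discrete Combes--Thomas estimate produces exponential off-diagonal decay of $(H_{\bom,\lambda}\up{\La}-E)^{-1}$ at a rate $m_0=m_0(\eta,d)>0$; thus $\La$ is $(m_0,E)$-regular once $m_0 L$ exceeds an explicit threshold.

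\emph{Putting it together, and the main obstacle.} Given $E_1\in(E_\emptyset(H_0),E_\Gamma(H_0))$, choose $E_1^\sharp\in(E_1,E_\Gamma(H_0))$ and run the two steps above with $E_1^\sharp$ in place of $E_1$; then for every $E<E_1$ the spectral gap of $H_{\bom,\lambda}\up{\La}$ on $\cA_\La$ is at least $\eta:=E_1^\sharp-E_1>0$, the Combes--Thomas rate near such $E$ is uniformly bounded below, and there is a single scale $L_0=L_0(E_1)$ --- depending on $d$, $\eta$, $\kappa(E_1^\sharp)$ and the Wegner constants, but \emph{not} on $\lambda$ --- at which the bootstrap MSA can be initialized, provided $\P\pa{\cA_{\La_{L_0}(x)}}\ge1-L_0^{-p}$ for all $x$, with $p$ the exponent required in \cite{GKboot}. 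By the displayed probability bound this holds as soon as $\lambda\ge\lambda(E_1)$, where $\lambda(E_1)<\infty$ is any number for which $(L_0+1)^d\,C\,\pa{2(E_1^\sharp-E_\emptyset(H_0))/\kappa(E_1^\sharp)}^{\theta}\,\lambda(E_1)^{-\theta}\le L_0^{-p}$. The MSA, together with the Wegner estimate above, then yields complete localization on $(-\infty,E_1)$. The delicate point --- the reason one cannot simply take $L_0$ arbitrary --- is this calibration: the MSA fixes $L_0$ in terms of $\kappa$ (which comes from Theorem~\ref{thmEGammaInt}) and the Wegner constants first, and only afterwards may $\lambda$ be enlarged to push $\P(\cA_{\La_{L_0}}^{\cc})=O(\lambda^{-\theta})$ below $L_0^{-p}$; the H\"older hypothesis on $S$ is precisely what makes that probability decay quantitative, and also what puts the Wegner estimate in usable form. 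Nothing here uses ergodicity, so the scheme applies verbatim to the generally non-ergodic trimmed Anderson models.
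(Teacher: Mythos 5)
Your proposal is correct and follows essentially the same route as the paper, which simply refers the reader to the proof of \cite[Theorem~1.7]{Kuc}: feed $S(t)\le Ct^\theta$ into the Wegner estimate \eqref{preWegner}, then use the projection positivity bound \eqref{PVPb} to produce a deterministic spectral gap in $(-\infty,E_1^\sharp]$ on the high-probability event $\{\omega_\zeta\ge s\}$, which (with Combes--Thomas) gives the initial length scale estimate once $\lambda$ is large, after which the bootstrap MSA of \cite{GKboot,GKfinvol,GKsudec} applies. You have filled in the details the paper delegates to \cite{Kuc}, and the calibration remark at the end (fix $L_0$ from $\kappa$ and the Wegner constants first, then enlarge $\lambda$) is exactly the right ordering of quantifiers.
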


Theorem~\ref{comploclambda} is proved exactly as \cite[Theorem~1.7]{Kuc} using  the Wegner estimate \eq{preWegner}, so we omit the proof.

We also establish localization in an interval at the bottom of the spectrum for fixed disorder.

\begin{theorem} \label{comploclambdafixed} Let  $H_{\bom,\lambda}$ be a $\Gamma$-trimmed Anderson model,   and suppose   $S(t)  \le C t^\theta$ for all  $t\ge 0$, where $\theta \in (0,1]$ and $C$ is a constant. Assume in addition that one of the following hypotheses hold:
\begin{enumerate}
\item $H_{\bom,\lambda}$ is  an ergodic $\Gamma$- trimmed Anderson model.

\item  There is no background potential, that is, $V\up{0}=0$.
\item  The  exponent $\theta$ satisfies $\theta > \frac d 2 $.
\end{enumerate}
Then for all $\lambda>0$ there exists $E_{\lambda }>E_\emptyset (H_0)$ such that $H_{\bom,\lambda}$ exhibits complete localization on the interval $(-\infty, E_\lambda)$. 
\end{theorem}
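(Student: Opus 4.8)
\emph{Sketch of a proof.} The plan is to produce a Wegner estimate and an initial length scale estimate at the bottom of the spectrum and then invoke the bootstrap multiscale analysis of \cite{GKboot,GKfinvol,GKsudec}, exactly as in the continuum treatment of crooked Anderson Hamiltonians in \cite{Kuc}. By Theorem~\ref{thmEGammaInt} we have $E_\Gamma(H_0)>E_\emptyset(H_0)$, so we may fix $E_1\in(E_\emptyset(H_0),E_\Gamma(H_0))$ and set $\kappa=\kappa(H_0,\Gamma,E_1)>0$ as in \eqref{kappabottom}. Inserting $S(t)\le Ct^\theta$ into \eqref{preWegner} gives, for every box $\La=\La_L(x)$ and every closed $I\subset(-\infty,E_1]$, the Wegner estimate
\beq\label{plan-weg}
\E\set{\tr\Chi_I(H_{\bom,\lambda}\up{\La})}\le 8\kappa^{-1}C\lambda^{-\theta}\abs{I}^\theta\abs{\Gamma\cap\La}\le C'\abs{I}^\theta L^d,
\eeq
which is of the form required by the multiscale analysis. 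The remaining hypotheses of the multiscale analysis (independence at a distance of $V_{\bom}$, the geometric resolvent inequalities, and the trivial bound $\tr\Chi_I(H_{\bom,\lambda}\up{\La})\le\abs{\La}$) hold automatically for discrete operators with independent single-site potentials, so the whole problem reduces to an initial length scale estimate on an interval $[E_\emptyset(H_0),E_\lambda]$ for some $E_\lambda>E_\emptyset(H_0)$. Complete localization on $(-\infty,E_\emptyset(H_0))$ is automatic, since $H_{\bom,\lambda}\up{\La}\ge H_0\up{\La}\ge E_\emptyset(H_0)$ deterministically.

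For the initial estimate I would pass through the finite-volume ground state energy. If $E_\emptyset(H_{\bom,\lambda}\up{\La_{L_0}(x)})>E_\emptyset(H_0)+\eta$ then, since $\sigma(H_{\bom,\lambda}\up{\La_{L_0}(x)})\subset[E_\emptyset(H_{\bom,\lambda}\up{\La_{L_0}(x)}),\infty)$, we have $\dist\pa{E,\sigma(H_{\bom,\lambda}\up{\La_{L_0}(x)})}\ge\tfrac12\eta$ for all $E\le E_\emptyset(H_0)+\tfrac12\eta$, and the Combes--Thomas estimate yields decay of the finite-volume resolvent of order $\e^{-c\sqrt{\eta}\,L_0}$, an admissible input for the bootstrap multiscale analysis as long as $\sqrt{\eta}\,L_0\to\infty$. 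Hence it suffices to produce a scale $L_0$ (as large as we wish), a number $\eta=\eta_{L_0}>0$ with $\sqrt{\eta_{L_0}}\,L_0\to\infty$, and a bound
\beq\label{plan-star}
\P\set{E_\emptyset(H_{\bom,\lambda}\up{\La_{L_0}(x)})\le E_\emptyset(H_0)+\eta_{L_0}}\le\varpi(L_0)\qtx{for all}x\in\Z^d
\eeq
with $\varpi(L_0)\to0$. Since the bootstrap of \cite{GKboot} launches from a single-scale estimate whose validity probability need only exceed $1$ minus a constant depending only on $d$, \eqref{plan-star} then delivers complete localization on $(-\infty,E_\lambda)$ with $E_\lambda=E_\emptyset(H_0)+\tfrac12\eta_{L_0}>E_\emptyset(H_0)$; note that $E_\lambda<E_1$ for $L_0$ large, so \eqref{plan-weg} is available throughout $[E_\emptyset(H_0),E_\lambda]$.

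It remains to verify \eqref{plan-star}, and this is where the three hypotheses enter. Under (iii), \eqref{plan-weg} gives $\P\set{E_\emptyset(H_{\bom,\lambda}\up{\La_L})\le E_\emptyset(H_0)+\epsilon}\le\E\set{\tr\Chi_{[E_\emptyset(H_0),E_\emptyset(H_0)+\epsilon]}(H_{\bom,\lambda}\up{\La_L})}\le C'L^d\epsilon^\theta$ (equivalently, by \eqref{PVPb}: on $\set{E_\emptyset(H_{\bom,\lambda}\up{\La})\le E_1}$ the ground state $\psi_0$ obeys $\norm{\Chi_{\Gamma\cap\La}\psi_0}^2\ge\kappa$, so $E_\emptyset(H_{\bom,\lambda}\up{\La})\ge E_\emptyset(H_0)+\lambda\kappa\min_{\zeta\in\Gamma\cap\La}\omega_\zeta$, and $\P\set{\min_{\zeta\in\Gamma\cap\La}\omega_\zeta\le s}\le\abs{\Gamma\cap\La}S(s)\le C''L^ds^\theta$); choosing $\eta_L=\epsilon=L^{-\beta}$ with $\tfrac d\theta<\beta<2$ --- possible precisely because $\theta>\tfrac d2$ --- gives \eqref{plan-star} with $\varpi(L_0)=C'L_0^{-(\beta\theta-d)}\to0$, while $\beta<2$ ensures $\sqrt{\eta_{L_0}}\,L_0=L_0^{1-\beta/2}\to\infty$. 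Under (i), $H_{\bom,\lambda}$ is ergodic, $E_\emptyset(H_0)=\inf\sigma(H_{\bom,\lambda})$ is deterministic, and one has a Lifshitz-tail bound near $E_\emptyset(H_0)$ (as in the continuum treatment \cite{Kuc}): by \eqref{PVPb} and the relative denseness of $\Gamma$, the event $\set{E_\emptyset(H_{\bom,\lambda}\up{\La_L})\le E_\emptyset(H_0)+\epsilon}$ forces a region of volume $\gtrsim\epsilon^{-d/2}$ on which all the $\omega_\zeta$ are atypically small, an event of probability $\le L^d\e^{-c\epsilon^{-d/2}}$, so \eqref{plan-star} holds with $\eta_L=L^{-\beta}$ for any $\beta\in(0,2)$, for every $\theta\in(0,1]$. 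Under (ii), $H_0=-\Delta$ and $E_\emptyset(H_0)=0$, with $E_\Gamma(-\Delta)\ge(4dK_\ast^{2d})^{-1}$ by \eqref{EGammaboundInt}; here I would invoke the Bourgain--Kenig argument \cite[Section~4]{BK} in the discrete form used by Rojas-Molina \cite[Section~1.2.5]{R},\cite[Section~2.1]{R2}, which --- using only the non-degeneracy of $\mu$ --- provides, with probability at least $1-L^{-N}$ for every $N$, a lower bound $E_\emptyset(H_{\bom,\lambda}\up{\La_L})\ge\eta_L$ with $\eta_L$ decaying only logarithmically, so that $\sqrt{\eta_{L_0}}\,L_0\to\infty$ and \eqref{plan-star} holds for every $\theta\in(0,1]$.

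The hard part is precisely the initial estimate at the bottom of the spectrum. Below $E_\emptyset(H_0)$ there is a deterministic spectral gap, but at and just above $E_\emptyset(H_0)$ the operator $H_{\bom,\lambda}$ may genuinely carry spectrum, so there is no uniform gap; one is forced to work with the shrinking gap $\eta_{L_0}$ and the ensuing sub-exponential initial decay, which requires $\sqrt{\eta_{L_0}}\,L_0\to\infty$, i.e.\ $\eta_{L_0}$ not too small. When neither ergodicity nor the absence of a background potential is available to make the failure probability in \eqref{plan-star} super-polynomially small, the only available tool is the H\"older Wegner estimate \eqref{plan-weg}, and making $\eta_L=L^{-\beta}$ simultaneously small enough that $L^d\eta_L^\theta\to0$ and large enough that $\beta<2$ is possible exactly when $\theta>d/2$ --- this is the reason for hypothesis (iii).
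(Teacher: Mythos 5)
Your proposal is correct and follows essentially the same route as the paper: use the Wegner estimate \eqref{preWegner} to run the bootstrap multiscale analysis of \cite{GKboot,GKfinvol,GKsudec}, and verify the initial length scale estimate separately in the three cases — via Lifshitz tails for (i), the Bourgain–Kenig argument for (ii), and the ``classical tails'' bound $\P\{E_\emptyset(H_{\bom,\lambda}\up{\La_L})\le E_\emptyset(H_0)+\epsilon\}\lesssim L^d\epsilon^\theta$ for (iii), with $\theta>d/2$ reconciling this with the Combes–Thomas input (the paper cites \cite{FK,KK} for exactly this trade-off). Your write-up is considerably more detailed than the paper's one-paragraph proof, which simply points to the literature for each of the three initial estimates, but the decomposition and the role of hypothesis (iii) are identical.
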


The proof of this theorem is standard once we have the Wegner estimate \eq{preWegner}. (Thus we will have $E_\lambda < E_\Gamma(H_0)$.)  The  necessary input for starting the multiscale analysis can be verified as follows:
\begin{enumerate}
\item\label{i:lif} If $H_{\bom,\lambda}$ is ergodic, it  has Lifshitz tails \cite{M,Klop99}(the proofs apply also to the discrete case),   and we proceed as in \cite[Proposition~4.3]{GKloc}.

\item If $V\up{0}=0$, we proceed as in \cite[Remark~4.4]{GKloc}; the argument can be adapted to the discrete case as noted in \cite[Theorem~1.2.6]{R}, \cite{R2}.

\item  If $\theta > \frac d 2 $, we employ the same strategy as in \ref{i:lif}, replacing the Lifshistz tails with the  `classical tails' given by  the condition  $\theta > \frac d 2 $ as in  \cite[Proof of Theorem~3$^\prime$]{FK}, \cite[Proof of Theorem~3.11]{KK}.

\end{enumerate}

\section{The ground state energy  of trimmed Schr\"odinger operators}

In this section we prove Theorems~\ref{thmEinftyEGamma} and \ref{thmEGammaInt}.

Given $H=-\Delta +V$ on $\ell^2(\Z^d)$, where $V$ is a bounded potential, we will use the shorthand notation  $E_\Gamma=E_\Gamma(H)$, $E_\Gamma(t)=E_\Gamma(H,t)$, $E_{\emptyset}=E_{\emptyset}(H)$.

\subsection{Equality of the ground state energies}\label{secbottom}
We start by  proving  Theorem~\ref{thmEinftyEGamma}.
\begin{proof}[Proof of Theorem~\ref{thmEinftyEGamma}]

We first obtain a simple upper bound on $\delta_\Gamma(H)$ (hence on $\delta_\Gamma(H,t)$ as well), to be used later on. To this end, note that  $E_\emptyset \ge \inf_{x\in \Z^d} V(x)$, and hence
\beq\label{HGamma <}
H_\Gamma - E_\emptyset \le -\Delta_\Gamma + \spr{V} \le 4d +  \spr{V}.
\eeq
It follows that
\beq\label{HGamma <1}
E_\Gamma  - E_\emptyset \le E_\Gamma(-\Delta) + \spr{V}\le 2d + \spr{V},
\eeq
where we used $E_\Gamma(-\Delta)\le \inf_{x\in \Gamma^{\cc}} \scal{\delta_x, -\Delta \delta_x}=2d$,
that is,
\beq
\delta_\Gamma(H)\le  
\delta_\Gamma(-\Delta ) + \spr(V) \le 2d +  \spr(V).
\eeq

Suppose $E_\Gamma >E_{\emptyset}$, since otherwise there is nothing to prove.    
By replacing $H$ by $H-E_{\emptyset}$, we may assume $E_{\emptyset}=0$, so $\delta_\Gamma(H)=  E_\Gamma$ and
  $\delta_\Gamma(H,t)=  E_\Gamma(t)$.   

 Let  $\nu>0$. Then   $H +\nu \ge \nu$ (recall  $E_{\emptyset}=0$), so $(H+\nu)^{-1}\le \frac 1 \nu$.  It follows that  on $\ell^2(\Gamma)$ we have
\beq
S_\nu=H_{\Gamma^\cc}+\nu-u(H_\Gamma+\nu)^{-1} u^*\ge \nu ,\quad u=\chi_\Gamma \Delta \chi_{\Gamma^{\cc}},
\eeq
since $S_\nu$
is the the Schur complement of $H_\Gamma+ \nu$, and  we have 
\beq
S_\nu^{-1}=\Chi_{\Gamma} (H+\nu)^{-1}  \Chi_{\Gamma} \le \frac 1 \nu.
\eeq
In particular, we conclude that
\beq\label{schurest4}
H_{\Gamma^\cc}\ge u(H_\Gamma+\nu)^{-1} u^*  \qtx{for all} \nu >0.
\eeq

By hypothesis $E_\Gamma>0$, so we take $\eta \in (0,E_\Gamma)$.  Note that for all $\nu>0$ we have
\beq
(H_\Gamma-\eta)^{-1}\le \left(1+\tfrac{\eta + \nu}{E_\Gamma - \eta}\right)(H_\Gamma+\nu)^{-1}.
\eeq
We now  consider the Schur complement $S_{-\eta}(t)$  of $\pa{H_\Gamma(t)}_\Gamma -\eta$, and use \eq{schurest4} and \eq{HGamma <}, getting
\begin{align}\notag
S_{-\eta}(t)&= H_{\Gamma^\cc}+t - \eta-u(H_\Gamma-\eta)^{-1} u^*\\
& \ge  H_{\Gamma^\cc}+t - \eta-
\left(1+\tfrac{\eta + \nu}{E_\Gamma - \eta}\right) u(H_\Gamma+\nu)^{-1} u^*\\
& \ge H_{\Gamma^\cc}+t - \eta- \left(1+\tfrac{\eta + \nu}{E_\Gamma - \eta}\right)H_{\Gamma^\cc}\ge t -\eta -\tfrac{\eta + \nu}{E_\Gamma - \eta}(4d +{\spr} (V)).
\notag\end{align}
Since $\nu>0$ is arbitrary, we obtain
\beq
S_{-\eta}(t)\ge t -\eta -\tfrac{\eta }{E_\Gamma - \eta}(4d  +{\spr} (V)).
\eeq
We conclude that
\beq
S_{-\eta}(t)>0 \sqtx{if} \eta < \frac {t +4d + E_\Gamma +{\spr}(V) -\sqrt{\pa{t +4d + E_\Gamma +{\spr} (V)}^2-4E_\Gamma t}}{2 },
\eeq
so it follows from the Schur complement condition for positive definiteness that 
\begin{align}\notag
E_\Gamma(t)& \ge  \frac {t +4d + E_\Gamma +{\spr} (V) -\sqrt{\pa{t +4d + E_\Gamma +{\spr} (V)}^2-4E_\Gamma t}}{2 }\\  \notag & =
 \frac {2 E_\Gamma t}{t +4d + E_\Gamma +{\spr}
  (V) +\sqrt{\pa{t +4d + E_\Gamma +{\spr} (V)}^2-4E_\Gamma t}}\\  & \ge \frac {E_\Gamma t}{t +4d + E_\Gamma +{\spr}
   (V) } \qtx{for} t>0,  \label{eq:intermbnd}
\end{align}
 Combining with  \eq{HGamma <1} we get 
\beq \label{Etest334}
E_\Gamma(t) \ge \frac {E_\Gamma t}{t +6d  + 2{\spr}(V) }   \qtx{for} t>0,
\eeq
which is \eq{Etest33},
Leting $t \to \infty$ in \eq{Etest33} we get $E_\Gamma(\infty)\ge E_\Gamma$.  SInce $E_\Gamma(\infty)\le E_\Gamma$,
we get $E_\Gamma(\infty)=E_\Gamma$.
\end{proof}

\subsection{Lower bounds on the ground state energy for arbitrary potential}  \label{seclbgs} Theorem~\ref{thmEGammaInt} for arbitrary bounded potential $V$, namely the lower 
 bounds \eqref{delgambndInt}-\eqref{EGammaboundVInt}, follows from the following theorem.

 We recall $Y_{d,V}=2d+1  +{\spr}(V)$ for a  bounded potential $V$.

\begin{theorem}\label{thmEGammaV} Let $\Gamma \subsetneq \Z^d$ be  $(K,Q)$-relatively dense, and let $H=-\Delta +V$ on $\ell^2(\Z^d)$, where $V$ is a bounded potential.
Then  
\beq\label{delgambnd}
 \delta_\Gamma(H,t) \ge  \frac {Q}{ 2dK-1}\pa{\frac {1}{Y_{d,V}^{2dK-1}}-  \frac {1}{\pa{Y_{d,V}+t}^{2dK-1}}}\qtx{for all} t\ge 0.
\eeq
 As a consequence, we get
\beq\label{EGammaboundV}
 \delta_\Gamma(H)\ge  \frac Q{\pa{ 2dK-1}{Y_{d,V}^{2dK-1}}}>0 .
\eeq
\end{theorem}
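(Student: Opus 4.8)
The plan is to prove the stronger inequality \eqref{delgambnd}; \eqref{EGammaboundV} follows from it by letting $t\to\infty$ (using Theorem~\ref{thmEinftyEGamma}, or simply $\delta_\Gamma(H)\ge\sup_{t\ge0}\delta_\Gamma(H,t)$). Everything rests on one estimate — the `quantitative unique continuation principle for ground states' of Lemma~\ref{lemgraph}: writing $H_\Gamma(s)=H+s\Chi_\Gamma$, if $\psi_s$ denotes the normalized ground state of $H_\Gamma(s)$ on a suitable finite box (it exists and may be chosen strictly positive by Perron--Frobenius), then $\norm{\Chi_\Gamma\psi_s}^2=\sum_{\zeta\in\Gamma}\abs{\psi_s(\zeta)}^2\ge Q\pa{Y_{d,V}+s}^{-2dK}$. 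To exploit this I would reduce to a finite box $\La$: take $K$ odd (the even case is the same after replacing the tiles $\Lambda\up{0}_K(\zeta)$ by shifted boxes of side $K$ containing them), so that the boxes $\Lambda\up{0}_K(\zeta)$, $\zeta\in K\Z^d$, partition $\Z^d$, and let $\La$ run over finite boxes that are unions of such tiles; by min--max the finite-volume ground state energies decrease to the infinite-volume ones as $\La\uparrow\Z^d$, so it suffices to bound $E(t)-E(0)$, where $E(s):=E_\emptyset(H_\Gamma(s)\up{\La})$, and pass to the limit (note $E(0)=E_\emptyset(H\up{\La})$).

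For each $s\ge0$, $E(s)$ is a simple eigenvalue of the self-adjoint analytic family $s\mapsto H\up{\La}+s\Chi_\Gamma$ (Perron--Frobenius), hence $E(\cdot)$ is real analytic with $E'(s)=\scal{\psi_s,\Chi_\Gamma\psi_s}=\norm{\Chi_\Gamma\psi_s}^2$; integrating the mass estimate gives
\[
E(t)-E(0)=\int_0^t E'(s)\,\di s\ \ge\ \int_0^t\frac{Q\,\di s}{\pa{Y_{d,V}+s}^{2dK}}=\frac{Q}{2dK-1}\pa{\frac1{Y_{d,V}^{2dK-1}}-\frac1{\pa{Y_{d,V}+t}^{2dK-1}}},
\]
which, after $\La\uparrow\Z^d$, is \eqref{delgambnd}; letting $t\to\infty$ then gives \eqref{EGammaboundV}.

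It remains to prove the mass estimate on $\La$. Since $-\Delta\up{\La}\ge0$ one has $E(s)\ge\inf_{\Z^d}V$, so at every $x\in\La$
\[
0<2d+V(x)+s\Chi_\Gamma(x)-E(s)\le 2d+\spr(V)+s\le Y_{d,V}+s ,
\]
the strict left inequality because the middle quantity equals $\psi_s(x)^{-1}\sum_{u\sim x,\,u\in\La}\psi_s(u)>0$. Inserting this bound into the eigenvalue equation $\psi_s(w)\pa{2d+V(w)+s\Chi_\Gamma(w)-E(s)}=\sum_{u\sim w,\,u\in\La}\psi_s(u)$ and retaining a single term on the right gives $\pa{Y_{d,V}+s}^{-1}\psi_s(w')\le\psi_s(w)\le\pa{Y_{d,V}+s}\psi_s(w')$ for every edge $\{w,w'\}$ of $\La$. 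Iterating along a coordinatewise geodesic — which stays in any box containing its endpoints — yields $\abs{\psi_s(x)}^2\le\pa{Y_{d,V}+s}^{2\|x-\zeta\|_1}\abs{\psi_s(\zeta)}^2$ whenever $x$ and $\zeta$ lie in a common tile. Now sum over $x$ in a tile $T\subset\La$: the weight $\sum_{x\in T}\pa{Y_{d,V}+s}^{2\|x-\zeta\|_1}$ factorizes over coordinates into $d$ geometric sums, each at most $\tfrac12\pa{Y_{d,V}+s}^{2K}$ because $Y_{d,V}+s\ge2d+1\ge3$, so $\sum_{x\in T}\abs{\psi_s(x)}^2\le\pa{Y_{d,V}+s}^{2dK}\abs{\psi_s(\zeta)}^2$ for each $\zeta\in\Gamma\cap T$. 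Summing this over the $\ge Q$ points of $\Gamma\cap T$, and then over the tiles partitioning $\La$, gives $Q\sum_{x\in\La}\abs{\psi_s(x)}^2\le\pa{Y_{d,V}+s}^{2dK}\sum_{\zeta\in\Gamma}\abs{\psi_s(\zeta)}^2$, i.e.\ the mass estimate.

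The main obstacle is this last step — Lemma~\ref{lemgraph} — and within it the geometric-sum bound $\sum_{x\in T}\pa{Y_{d,V}+s}^{2\|x-\zeta\|_1}\le\pa{Y_{d,V}+s}^{2dK}$ for a tile $T$: it is the exponential localization of this weight around $\zeta$ that absorbs the full volume of $T$ and avoids the Cheeger/covering loss present in cruder arguments, and it is the one place where the exponent $2dK$ (hence $2dK-1$ in \eqref{delgambnd}) is forced. Pinning these constants down exactly, handling the case $K$ even cleanly, and verifying the Perron--Frobenius positivity and the analyticity on finite volumes are the delicate points; the remainder is routine.
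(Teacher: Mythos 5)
Your argument follows the paper's proof very closely: pass to a finite box $\La$, use Perron--Frobenius to obtain a simple, strictly positive finite-volume ground state $\psi_s$ of $H+s\Chi_\Gamma$, differentiate the ground-state energy to get $E'(s)=\norm{\Chi_\Gamma\psi_s}^2$ by Feynman--Hellmann, bound this below by $Q(Y_{d,V}+s)^{-2dK}$ via a ground-state unique continuation estimate, integrate in $s$, and let $\La\uparrow\Z^d$. The only real difference is how you implement the mass estimate. The paper's Lemma~\ref{lemgraph} takes $m$-th powers of the positivity-preserving operator $T=2d+1+\norm{V}_\infty-H_\La$ to get $\psi_s(x)\ge(Y_{d,V}+s)^{-m}\sum_{\norm{x-y}_1\le m}\psi_s(y)$, specializes to $m=dK$, squares, and uses $\pa{\sum_i a_i}^2\ge\sum_i a_i^2$ for nonnegative $a_i$. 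You instead extract the one-edge Harnack bound $\psi_s(w)\le(Y_{d,V}+s)\psi_s(w')$ directly from the eigenvalue equation, iterate along coordinate geodesics, and absorb the resulting weight $\sum_{x\in T}(Y_{d,V}+s)^{2\norm{x-\zeta}_1}$ by a geometric-sum estimate. These are two ways of organizing the same elementary computation and give identical constants; yours has the modest advantage of not introducing $T$ and its powers explicitly. Your fix for $K$ even (tiling by shifted side-$K$ cubes, each chosen to contain an appropriate $\Lambda\up{0}_K(\zeta)$) is correct and works cleanly, since $Y_{d,V}+s\ge 3$ makes the single-coordinate geometric sum over $K$ terms at most $(Y_{d,V}+s)^{2K}/8$. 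The points you flag as delicate — positivity, analyticity of the simple eigenvalue, and the geometric sum — are indeed routine and are handled in the paper by Lemma~\ref{lemgraph}. In short, your proof is correct and is essentially the paper's, with a slightly different packaging of the unique continuation step.
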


The proof of the theorem is based on what may be called a quantitative unique continuation principle for ground states, given in the following lemma.

Given  a nonempty connected  subset $B$ of $\Z^d$  and $x,y \in B$,   we let $d_B(x,y)$ denote the graph distance between $x$ and $y$ in $B$, i.,e., the minimal length of a path in $B$ connecting  $x$ and $y$.  We set  $\diam B= \max_{x,y \in B}d_B(x,y)$, the diameter of $B$ in the graph theory sense.  Note that  we always have $d_B(x,y) \ge \norm{x-y}_1$, and $d_B(x,y) =\norm{x-y}_1$ for all $x,y \in B$ if $B=\Z^d$ or  $B=\Lambda_L(x_0)$.  In particular, we have
$\diam \Lambda_L(x_0)\le dL$.

\begin{lemma}\label{lemgraph} Let  $H=-\Delta +V$ on $\ell^2(\Z^d)$, where $V$ is a bounded potential.
  Let $\La=\Lambda_L(x_0)$ be a box in $\Z^d$. Then $E\up{\La}= \inf \sigma{(H_\La)}$ is a simple eigenvalue, and there exists a unique strictly positive  ground state $\psi_g\up{\La}$, i.e.,  there exists a unique $\psi_g\up{\La}\in \ell^2(\La)$ such that $H_\La \psi_g\up{\La} = E\up{\La}\psi_g\up{\La}$,
$\norm{\psi_g\up{\La}}=1$, and $\psi_g\up{\La}(x)>0$ for all $x \in \La$.  Moreover, for all $x \in \La$ and $m \in \N$ we have
\beq\label{graph1}
\psi_g\up{\La}(x) \ge
Y_{d,V}^{-m }\sum_{y \in \La; \ \norm{x-y}_1\le m} \psi_g\up{\La}(y).
\eeq
We also get a uniform lower bound: 
\beq\label{Lalowerbound}
\psi_g\up{\La}(x) \ge
Y_{d,V}^{-dL }\qtx{for all} x \in \La.
\eeq
\end{lemma}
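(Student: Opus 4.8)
The plan is to split the statement into three parts: (a) simplicity of $E\up{\La}$ and positivity of the ground state; (b) the pointwise propagation estimate \eqref{graph1}; (c) the uniform lower bound \eqref{Lalowerbound}, which will follow from \eqref{graph1} by a connectivity argument.

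For part (a), I would invoke the standard Perron--Frobenius machinery. Since $\La$ is a finite box and $H_\La$ is a finite matrix, one passes to $A = c\,I - H_\La$ with $c$ large enough that $A$ has nonnegative entries. The key combinatorial input is that $-\Delta_\La$ has strictly negative off-diagonal entries exactly along nearest-neighbour edges, and a box $\La_L(x_0)$ is connected in the graph sense, so $A$ is irreducible. The Perron--Frobenius theorem then gives that the largest eigenvalue of $A$ — equivalently $E\up{\La}=\inf\sigma(H_\La)$ — is simple, with a unique (up to scalar) eigenvector that may be taken strictly positive; normalizing gives $\psi_g\up{\La}$. I expect this to be routine.

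The heart of the lemma is \eqref{graph1}. The starting point is the eigenvalue equation written out coordinatewise: for each $x\in\La$,
\[
2d\,\psi_g\up{\La}(x) - \sum_{\substack{y\in\La\\ \norm{x-y}=1}}\psi_g\up{\La}(y) + V(x)\psi_g\up{\La}(x) = E\up{\La}\psi_g\up{\La}(x).
\]
Solving for one neighbour term and using $\psi_g\up{\La}>0$, one gets that each neighbour value $\psi_g\up{\La}(y)$ is bounded by $\bigl(2d + V(x) - E\up{\La}\bigr)\psi_g\up{\La}(x)$ plus the other (positive) neighbour contributions — so in fact $\psi_g\up{\La}(y) \le \bigl(2d + V(x) - E\up{\La}\bigr)\psi_g\up{\La}(x) + \sum_{y'\ne y}\psi_g\up{\La}(y')$ is not quite what I want; instead the clean statement is
\[
\sum_{\substack{y\in\La\\ \norm{x-y}=1}}\psi_g\up{\La}(y) \;\le\; \bigl(2d + V(x) - E\up{\La}\bigr)\psi_g\up{\La}(x),
\]
which one upgrades to the one-step inequality $\psi_g\up{\La}(x) \ge Y_{d,V}^{-1}\sum_{y\in\La,\ \norm{x-y}=1}\psi_g\up{\La}(y)$ after checking $2d + V(x) - E\up{\La} \le Y_{d,V} = 2d+1+\spr(V)$. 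This last bound uses $E\up{\La} \ge \inf_x V(x) \ge \sup_x V(x) - \spr(V)$, which gives $2d + V(x) - E\up{\La} \le 2d + \spr(V) < Y_{d,V}$. Then \eqref{graph1} follows by induction on $m$: iterating the one-step estimate, and using that the balls $\{\norm{x-y}_1\le m\}$ grow by adding exactly the nearest neighbours of the previous shell (and that all terms are positive so nothing is lost by over- or under-counting which site a given $y$ is reached from). I would set up the induction carefully to make sure the bookkeeping of overlapping paths only helps (every visited site appears with a positive coefficient).

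For part (c): fix $x\in\La$. Since $\La=\La_L(x_0)$ is connected and $\diam\La \le dL$ (as noted just before the lemma, $d_\La(x,y)=\norm{x-y}_1$ on a box), every $y\in\La$ satisfies $\norm{x-y}_1 \le dL$, so taking $m = dL$ in \eqref{graph1} gives $\psi_g\up{\La}(x) \ge Y_{d,V}^{-dL}\sum_{y\in\La}\psi_g\up{\La}(y) \ge Y_{d,V}^{-dL}\max_{y}\psi_g\up{\La}(y) \ge Y_{d,V}^{-dL}$, where the last step uses $\norm{\psi_g\up{\La}}_2=1$, hence $\norm{\psi_g\up{\La}}_\infty \ge \norm{\psi_g\up{\La}}_2^2 = 1$ is false in general — rather $\max_y \psi_g\up{\La}(y) \ge \norm{\psi_g\up{\La}}_2 / \sqrt{\abs{\La}}$, which is not $\ge 1$. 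So instead I use $\sum_y \psi_g\up{\La}(y) \ge \bigl(\sum_y \psi_g\up{\La}(y)^2\bigr) = 1$? No — $\sum \psi_g(y) \ge \sum \psi_g(y)^2 = 1$ holds precisely because $0 < \psi_g\up{\La}(y) \le 1$ for every $y$ (each coordinate of a unit $\ell^2$ vector is at most $1$). That is the correct final step. The main obstacle I anticipate is purely bookkeeping: making the induction for \eqref{graph1} airtight, in particular verifying that summing the one-step inequality over a distance shell genuinely dominates the sum over the enlarged ball, given that a single site may be adjacent to several sites of the inner shell (this over-counting is harmless since all $\psi_g\up{\La}$ values are positive, but it must be spelled out).
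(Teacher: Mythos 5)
Your proposal is essentially the paper's proof, repackaged. Part (a) is identical in substance: you invoke Perron--Frobenius for the nonnegative irreducible matrix $cI - H_\La$; the paper takes $T = 2d+1+V_\infty - H_\La$ and instead of citing irreducibility notes directly that $\scal{\delta_x, T^{dL}\delta_y}\ge 1$ for all $x,y\in\La$, which is the same fact with sharper bookkeeping. Part (c) ends, after two false starts you catch yourself, at $\sum_y\psi_g(y)\ge\sum_y\psi_g(y)^2 = 1$, which is precisely the paper's $\norm{\psi_g}_1\ge\norm{\psi_g}_2=1$.

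The only place worth flagging is your one-step inequality for part (b). As written, you sum only over $\norm{x-y}=1$, but \eq{graph1} at $m=1$ includes the term $y=x$. Iterating a neighbor-only one-step would produce sums over walks (hence a parity obstruction: two steps never reaches $\ell^1$-distance $1$), not over the balls $\set{\norm{x-y}_1\le m}$. The fix is exactly the slack you already noticed: from the eigenvalue equation one has $\psi_g(x)+\sum_{y\sim x,\,y\in\La}\psi_g(y) = \pa{1+2d+V(x)-E\up{\La}}\psi_g(x) \le Y_{d,V}\,\psi_g(x)$, so the correct one-step is $Y_{d,V}\,\psi_g(x)\ge\sum_{y\in\La,\,\norm{x-y}_1\le 1}\psi_g(y)$, with the center included; the ``$+1$'' in $Y_{d,V}$ is there precisely to absorb the $y=x$ term. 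With that version, the induction you describe goes through (the $\ell^1$-convexity of the box $\Lambda_L(x_0)$ guarantees an intermediate site in $\La$, and positivity makes overcounting harmless). The paper sidesteps all of this bookkeeping by iterating at the operator level: $T^m\psi_g=\lambda_{\max}^m\psi_g$ together with $\scal{\delta_x,T^m\delta_y}\ge 1$ for $m\ge\norm{x-y}_1$ gives \eq{graph1} in one line. Same idea, cleaner packaging; your route works once the one-step is stated with $\norm{x-y}_1\le 1$.
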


\begin{proof} Without loss of generality we assume $0=\inf_{x\in \Z^d} V(x)$, so $0\le V \le V_\infty={\spr}(V)$ and $ E\up{\La}\ge 0$.

 Note that $\ell^2(\La)$ is a finite-dimensional Hilbert space. The existence of the unique stricty positive ground state follows from the  Perron-Frobenius Theorem. This can be seen as follows.  The  self-adjoint operator   $T=2d+1 +V_{\infty}  -H_\La$ on  $\ell^2(\La)$ is  positivity preserving, i.e.,   $\scal{\delta_x, T \delta_{y}}\ge 0$ for all $x,y \in \La$. Moreover,
\beq
\scal{\delta_x, T^{m} \delta_{y}}\ge 1 \sqtx{for} m\ge \norm{x-y}_1 \sqtx{for  all} x,y\in \La.
\eeq
In particular, recalling $\diam \La \le dL$, we have
\beq
\scal{\delta_x, T^{dL} \delta_{y}}\ge 1 \sqtx{for  all} x,y\in \La.
\eeq

It follows from the Perron-Frobenius Theorem that $\lambda_{\max}= \max \sigma(T)$ is a simple eigenvalue, and there exists a unique $\psi_g\up{\La}\in \ell^2(\La)$ such that $T \psi_g\up{\La} =\lambda_{\max}\psi_g\up{\La}$,
$\norm{\psi_g\up{\La}}=1$, and $\psi_g\up{\La}(x)>0$ for all $x \in \La$.  Clearly, $H_\La \psi_g\up{\La} = E\up{\La}\psi_g\up{\La}$ and
\beq
\lambda_{\max}= 2d+1 +V_{\infty}  -  E\up{\La} \le  2d+1  +V_{\infty}=Y_{d,V}.
\eeq
Moreover,  since $T \psi_g\up{\La} =\lambda_{\max}\psi_g\up{\La}$ and  $\psi_g\up{\La}(x)>0$ for all $x \in \La$, we have for all $x \in \La$ and $m \in \N$  ($\psi_g= \psi_g\up{\La}$)
\beq
\psi_g(x) \ge \lambda_{\max}^{-m }\sum_{y \in \La; \ \norm{x-y}_1\le m} \psi_g(y) ,
\eeq
which yields \eq{graph1}.

To get \eq{Lalowerbound}, just notice that  $1=\norm{\psi_g}_2 \le \norm{\psi_g}_1= \sum_{y \in \La} \psi_g(y)$.
\end{proof}

\begin{proof} [Proof of Theorem~\ref{thmEGammaV}]  Given $\zeta \in  K\Z^d$, fix $ \Gamma_\zeta\subset \Gamma \cap \Lambda\up{0}_K(\zeta)$ such that $\abs{\Gamma_\zeta}=Q$.

Let $R=KJ$ where $J=1,3,5,\ldots$ and consider $\La=\La_R=\La_R(0)$.
Then, by Lemma~\ref{lemgraph}, for all $t\ge 0$ we have that $E\up{\La}(t)= \inf \sigma( H_{\Gamma}(t))$ is a simple isolated eigenvalue with eigenvector  $\psi_{g,t}\up{\La}$  as in Lemma~\ref{lemgraph}, so it follows  that the orthogonal projection  $P_g(t)=\scal {\psi_{g,t}\up{\La}, \ \cdot \  \psi_{g,t}\up{\La}} \psi_{g,t}\up{\La}$ is differentiable in $t$, and
\begin{align}\label{dec1}
\frac \di {\di t} E\up{\La}(t) &= \frac \di {\di t} \tr P_g(t) H(t)\\ \notag  &= \tr P_g(t)\dot H(t)+\tr \dot P_g(t)H(t) = \tr P_g(t)\dot H(t)=\tr P_g(t) \chi_{\Gamma}\\ \notag  & \ge \sum_{\zeta \in  K\Z^d\cap \La}
\scal {\psi_{g,t}\up{\La}, \Chi_{\Gamma_\zeta} \psi_{g,t}\up{\La}}
=\sum_{\zeta \in  K\Z^d\cap \La}\sum_{x \in \Gamma_\zeta}  \pa{\psi_{g,t}\up{\La}(x)} ^2,
\end{align}
where on the second line we have used $\dot P_g=P_g\dot P_g(1-P_g)+(1-P_g)\dot P_g P_g$, cyclicity of the trace, and $P_gH(1-P_g)=0$.

If $x \in \Gamma_\zeta$, it follows from \eq{graph1} that
\beq
\psi_{g,t}\up{\La}(x) \ge  \pa{Y_{d,V}+t}^{- dK}\sum_{y \in \Chi_{\La_K(\zeta)}} \psi_{g,t}\up{\La}(y),
\eeq
 and hence
\beq\label{dec2}
\pa{\psi_{g,t}\up{\La}(x) }^2\ge \pa{Y_{d,V}+t}^{-2 dK}\sum_{y \in \Chi_{\La_K(\zeta)}} \pa{\psi_{g,t}\up{\La}(y)}^2.
\eeq

Combining \eq{dec1} and \eq{dec2} we get
\beq\label{eq:bndder}
\frac \di {\di t} E\up{\La}(t) \ge Q \pa{Y_{d,V}+t}^{- 2dK}\sum_{\zeta \in  K\Z^d\cap \La}\sum_{y \in \Chi_{\La_K(\zeta)}} \pa{\psi_{g,t}\up{\La}(y)}^2 = Q  \pa{Y_{d,V}+t}^{- 2dK}.
\eeq
Thus
\begin{align}
E\up{\La}(t) - E\up{\La}(0) &\ge Q \int_0^t \di s  \pa{Y_{d,V}+s}^{- 2dK}  \\ \notag & =
\frac {Q}{ 2dK-1}\pa{\frac {1}{Y_{d,V}^{2dK-1}}-  \frac {1}{\pa{Y_{d,V}+t}^{2dK-1}}}.
\end{align}

To conclude the proof of the theorem,  just note that $E_\Gamma(t)=\lim_{R \to \infty} E(\La_R)(t)$ for all $t\ge 0$.
\end{proof}

\subsection{Cheeger's inequality for  the ground state energy}\label{sec:isoperim}  Theorem~\ref{thmEGammaInt} for $H=-\Delta$, namely the lower bound 
 \eqref{EGammaboundInt}, follows from the following theorem.

\begin{theorem} \label{thmEGammaDelta} Let $\Gamma \subsetneq \Z^d$ be  $(K,Q)$-relatively dense.  Then 
\beq\label{EGammabound12}
E_{\Gamma}(-\Delta) \ge \frac 1 {4dK_\ast^{2d}} .
\eeq
In addition,
\beq\label{Etbound12}
 E_\Gamma(-\Delta, t) \ge  \frac 1 {(6d-1)K_\ast^{2d}}\qtx{for} t \ge 2d -1.
\eeq

\end{theorem}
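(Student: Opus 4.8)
The plan is to prove Theorem~\ref{thmEGammaDelta} via a discrete Cheeger-type inequality for the trimmed Laplacian, with the key point being the lower bound $\beta(\Gamma)>0$ for relatively dense $\Gamma$. First I would recall (or set up) the definition of the Cheeger isoperimetric constant $\beta(\Gamma)$ adapted to the trimming: for finite $B\subset \Gamma^\cc$, one compares the edge boundary of $B$ (edges of $\Z^d$ leaving $B$, \emph{including} edges into $\Gamma$) to $\abs{B}$, and $\beta(\Gamma)$ is the infimum of this ratio over all finite $B\subset\Gamma^\cc$ (this is \eq{defbetaGamma} referenced in the text). The first task is the standard discrete Cheeger inequality in this Dirichlet setting: $E_\Gamma(-\Delta)=\inf\sigma(-\Delta_\Gamma)\ge \tfrac14\beta(\Gamma)^2$. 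I would derive this by the usual co-area/layer-cake argument: for $\vphi\in\ell^2(\Gamma^\cc)$ estimate $\sum_{x\sim y}\abs{\vphi(x)-\vphi(y)}\ge \beta(\Gamma)\norm{\vphi}_1$ applied to $\vphi^2$ (extended by zero on $\Gamma$), use $\abs{a^2-b^2}=\abs{a-b}(a+b)$ and Cauchy--Schwarz to convert to $\norm{\vphi}_2^2$, and compare with the quadratic form $\scal{\vphi,-\Delta_\Gamma\vphi}=\tfrac12\sum_{x\sim y}\abs{\vphi(x)-\vphi(y)}^2$ (sum over all nearest-neighbor pairs in $\Z^d$, with $\vphi\equiv0$ on $\Gamma$). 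This yields the factor $\tfrac14$.

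Next I would prove the isoperimetric estimate $\beta(\Gamma)\ge \tfrac1{dK_\ast^{2d}}$ (or whatever constant makes \eq{EGammabound12} come out, namely $\beta(\Gamma)^2/4 \ge 1/(4dK_\ast^{2d})$, i.e. $\beta(\Gamma)\ge d^{-1/2}K_\ast^{-d}$). The mechanism: partition $\Z^d$ into the cubes $\Lambda_K^{(0)}(\zeta)$, $\zeta\in K\Z^d$. Given a finite $B\subset\Gamma^\cc$, for each occupied cube $\zeta$ with $B_\zeta:=B\cap\Lambda_K^{(0)}(\zeta)\ne\emptyset$, relative density gives $\abs{\Gamma\cap\Lambda_K^{(0)}(\zeta)}\ge Q$, so $\abs{B_\zeta}\le K_\ast^d - Q$. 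The key combinatorial fact is that within a single cube, the number of $\Z^d$-edges from $B_\zeta$ to its complement inside the cube is at least $\abs{B_\zeta}$ divided by something like the cube's diameter, or more cleanly: each nonempty proper subset of a cube of side $K$ has edge boundary (relative to that cube) at least $1$, and by a slicing argument one gets a bound linear in $\abs{B_\zeta}$ — roughly, projecting onto each of the $d$ coordinate axes, every point of $B_\zeta$ lies on a line segment within the cube that must exit $B_\zeta$, contributing edges; summing over lines and coordinates gives $\abs{\partial B_\zeta}\gtrsim \abs{B_\zeta}/K$. Combined with $\abs{\Lambda_K^{(0)}}=K_\ast^d$ one extracts the claimed constant. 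Then summing over all occupied cubes, and noting these per-cube boundary edges are distinct (they live in disjoint cube-interiors, plus the cube-crossing edges only help), gives $\abs{\partial_{\Z^d} B}\ge \beta(\Gamma)\abs{B}$ with the stated $\beta(\Gamma)$.

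For the finite-$t$ statement \eq{Etbound12}, the operator is $-\Delta_\Gamma$ again but we now want to compare with the full operator $H_\Gamma(t)=-\Delta+t\Chi_\Gamma$ and use that for $t\ge 2d-1$ the potential term dominates in a way that lets a cruder bound through; concretely, on $\ell^2(\Z^d)$ one has $-\Delta+t\Chi_\Gamma \ge$ (something like) $\min\{E_\Gamma(-\Delta), \text{gap contribution}\}$, and the arithmetic $6d-1 = 4d + (2d-1)$ suggests splitting $-\Delta = \tfrac{2d-1}{2d}(-\Delta) + \tfrac1{2d}(-\Delta)$ or bounding $-\Delta\le 4d$ on the trimmed part while keeping a fraction for the Cheeger estimate; I would track constants so that the Cheeger piece contributes $K_\ast^{-2d}$ and the denominator absorbs $4d$ from $\norm{-\Delta}$ plus $2d-1$ from the threshold on $t$. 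I expect the main obstacle to be the combinatorial isoperimetric step inside a single cube — getting a clean linear-in-$\abs{B_\zeta}$ lower bound on the intra-cube edge boundary with an explicit constant giving exactly $K_\ast^{2d}$ in the denominator (as opposed to the worse $K^{2d+2}$ from the Bourgain--Kenig argument mentioned in the introduction). The slicing/projection argument has to be done carefully to avoid losing powers of $K$, and matching the $K_\ast$ (odd-completion) bookkeeping rather than $K$ requires using the half-open cubes $\Lambda_K^{(0)}(\zeta)$ consistently.
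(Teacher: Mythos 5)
Your overall strategy matches the paper's: define a Cheeger constant $\beta(\Gamma)$ for the trimmed Laplacian, prove a discrete Cheeger inequality, and bound $\beta(\Gamma)$ from below using relative density via a cube-by-cube count. However, there are two genuine problems.

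First, the intra-cube isoperimetric step you identify as the ``main obstacle'' is both false as you state it and unnecessary. You claim a slicing/projection argument gives a bound linear in $\abs{B_\zeta}$, namely $\abs{\partial B_\zeta}\gtrsim\abs{B_\zeta}/K$. This fails: take $Q=1$ and let $B_\zeta$ be the full cube minus the single $\Gamma$-point; then $\abs{B_\zeta}\approx K^d$ while the in-cube boundary is at most $2d$ edges (plus possibly nothing to neighboring cubes if they are also filled). The slicing picture breaks because the $Q$ points of $\Gamma$ can all sit on a single coordinate line, so most axis-parallel slices through $B_\zeta$ never exit $B_\zeta$ inside the cube. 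The paper's Lemma~\ref{lembdrybound} uses a far cruder fact that is nonetheless correct and exactly sufficient: if $A\cap\Lambda_K(\zeta)\ne\emptyset$, then since $\Gamma\cap\Lambda_K\up{0}(\zeta)\ne\emptyset$ and $A\subset\Gamma^{\cc}$, the connected box $\Lambda_K(\zeta)$ contains a point of $A$ and a point of $A^{\cc}$, hence contributes at least one boundary edge. Summing, $\abs{\partial A}\ge N_A\ge K_\ast^{-d}\abs{A}$, so $\beta(\Gamma)\ge K_\ast^{-d}$. Combined with the Cheeger inequality in the form $E_\Gamma(-\Delta)\ge\beta(\Gamma)^2/(4d)$ (not $\beta^2/4$: the Cauchy--Schwarz denominator involves $\sum_{x\sim y}(\vphi(x)+\vphi(y))^2\le 8d\norm{\vphi}^2$ because each vertex has degree $2d$), this gives exactly \eqref{EGammabound12}. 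Your target $\beta(\Gamma)\ge d^{-1/2}K_\ast^{-d}$ reflects the wrong normalization, and the stronger bound you hoped to prove is not available.

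Second, your treatment of \eqref{Etbound12} is too vague to assess as a proof. The arithmetic-guessing (``$6d-1=4d+(2d-1)$'') does not supply a mechanism. The paper's route is specific: define $\beta(t)$ as a Cheeger constant for $H(t)=-\Delta+t\Chi_\Gamma$ where test sets $A$ are now allowed to meet $\Gamma$, and show (Lemma~\ref{lembeta}) that when $t\ge 2d-1$ one can remove $\Gamma$-points from $A$ one at a time without increasing $\beta_A(t)$ whenever $\beta_A(t)\le 1$, reducing to the $\beta(\Gamma)$ bound; this gives $\beta(t)\ge\min\{\beta(\Gamma),1\}\ge K_\ast^{-d}$. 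Then a Lawler--Sokal-type Cheeger inequality is run on the augmented graph $\Z^d\cup\{\infty\}$ with weight $t$ on edges from $\Gamma$ to $\infty$; the Cauchy--Schwarz denominator picks up $8d\norm{\vphi}^2+2t\norm{\Chi_\Gamma\vphi}^2$, yielding $E(t)\ge\beta(t)^2/(4d+t)$, and finally monotonicity of $E(t)$ in $t$ lets one evaluate at $t=2d-1$, producing the $6d-1$. Without identifying the augmented-graph Cheeger inequality and the monotone-reduction lemma for $\beta(t)$, your sketch does not reach \eqref{Etbound12}.
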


\begin{remark}  For  $H=-\Delta$ the estimate \eq{EGammabound12}   in Theorem~\ref{thmEGammaDelta} is better than the corresponding estimate from Theorem~\ref{thmEGammaV}. Note that \eq{Etbound12} only holds for $ t \ge 2d -1$, giving a lower bound independent of $t$.  We can get an estimate for all $t\ge 0$ by combining \eq{EGammabound12} and \eqref{eq:intermbnd}, getting
\begin{align}\label{Etest3399}
{E}_\Gamma(t) \ge \frac {   t}{4dK_\ast^{2d}\pa{t +4d}\ + 1}.
\end{align} 
This estimate is better than \eq{Etbound12} for sufficiently large $t$.
\end{remark}

Given $A \subset \Z^d$, let
\begin{itemize}
\item     $\partial A= \set{(x,y)\in A \times A^\cc; \abs{x-y}=1}$.

\item  $\partial_- A= \set{x \in A; \; (x,y)\in \partial A \qtx{for some} y \in A^{\cc}}$.

\item  $\partial_+ A= \set{y \in A^{\cc}; \; (x,y)\in \partial A \qtx{for some} x\in A}$.

\item Given $x \in Z^d$, set
\beq
\eta_A(x)= \abs{\set{y \in \Z^d;  (x,y)\in \partial A  }}\in \set{0,1,2,\ldots, 2d},
\eeq
so  $\partial_- A= \set{x\in \Z^d; \; \eta_A(x)\ge 1}$.
\end{itemize}

Note that
\beq
\scal{\Chi_A, (-\Delta )\Chi_A}= \abs{\partial  A}= \sum_{x\in \Z^d} \eta_A(x)= \sum_{x \in \partial_- A}\eta_A(x).
\eeq

\begin{lemma}\label{lembdrybound}
Let $\Gamma \subsetneq \Z^d$ be  $(K,Q)$-relatively dense.  Then for all  $A \subset \Z^d \setminus \Gamma$ we have
\beq\label{partial-}
\scal{\Chi_A,( -\Delta) \Chi_A}=  \abs{\partial  A} \ge K_\ast^{-d} \abs{A}.
\eeq
\end{lemma}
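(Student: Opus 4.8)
The plan is to estimate $\abs{\partial A}$ from below by counting, for each $\zeta \in K\Z^d$, the boundary edges of $A$ that live inside the box $\Lambda^{(0)}_K(\zeta)$. Fix $A \subset \Z^d \setminus \Gamma$ and fix $\zeta \in K\Z^d$. Since $\Gamma$ is $(K,Q)$-relatively dense, $\abs{\Gamma \cap \Lambda^{(0)}_K(\zeta)} \ge Q$, and because $A$ is disjoint from $\Gamma$ we have $\Lambda^{(0)}_K(\zeta) \cap A \subset \Lambda^{(0)}_K(\zeta) \setminus \Gamma$, a set that misses at least $Q$ points of the box. The boxes $\set{\Lambda^{(0)}_K(\zeta)}_{\zeta \in K\Z^d}$ tile $\Z^d$: they are pairwise disjoint and their union is all of $\Z^d$ (this is where the superscript-$0$, half-open version of the box is used — the ordinary $\Lambda_K$ would overlap when $K$ is even). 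So it suffices to show, for each single box $\Lambda = \Lambda^{(0)}_K(\zeta)$,
\begin{align}\label{eq:boxclaim}
\#\set{(x,y) \in \partial A : x \in \Lambda} \ \ge\ K_\ast^{-d}\,\abs{A \cap \Lambda},
\end{align}
and then sum \eqref{eq:boxclaim} over $\zeta \in K\Z^d$, since $\sum_{\zeta} \abs{A \cap \Lambda^{(0)}_K(\zeta)} = \abs{A}$ and every pair in $\partial A$ gets counted (at least once; exactly once in this edge-indexing by its $A$-endpoint).

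To prove \eqref{eq:boxclaim}, the idea is a slicing / isoperimetry-in-a-box argument. If $A \cap \Lambda = \emptyset$ there is nothing to prove, so assume $A \cap \Lambda \ne \emptyset$; also $(A \cap \Lambda) \ne \Lambda$ since $\Lambda$ meets $\Gamma$ but $A$ does not. Pick a lattice direction, say $e_1$, and look at the lines $\ell = \set{\xi + n e_1 : n} \cap \Lambda$ parallel to $e_1$ inside $\Lambda$; there are $K_\ast^{d-1}$ of them if $K$ is odd (and a similar count controlled by $K_\ast$ in general — one can just work with $\Lambda_{K}$ of cardinality $K_\ast^d$ throughout and note $\Lambda^{(0)}_K \subset \Lambda_K$). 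On any such line, if the line contains a point of $A$ but is not entirely contained in $A$, then walking along the line one must cross from $A$ to $A^\cc$ at least once, producing a boundary edge $(x,y) \in \partial A$ with $x \in A \cap \Lambda$, $y \notin A$; moreover $x \in \Lambda$. This already gives one boundary edge per ``active'' line in direction $e_1$. The remaining issue is the lines that lie \emph{entirely} inside $A$: such a line contributes $K_\ast$ points to $A \cap \Lambda$ but possibly no $e_1$-boundary edge. The fix is to use that $A \ne \Lambda$, so some point $w \in \Lambda \setminus A$ exists; any line through a point of $A$ and a point of $\Lambda \setminus A$ in the same $e_1$-column, or in a column reachable by the line through $w$ in a transverse direction, is active in \emph{some} direction. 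Rather than chase this, the cleanest route is: project $A \cap \Lambda$ onto the coordinate hyperplane $\set{x_1 = 0}$ and count, for each fiber, contributions in \emph{all} $d$ directions, then average — i.e. show $d \cdot \abs{\partial A \cap (\Lambda \times \Z^d)} \ge \sum_{i=1}^d (\text{active lines in direction } e_i) \ge \abs{A \cap \Lambda}/K_\ast^{d-1} \cdot (\text{something})$, using that each point of $A \cap \Lambda$ lies on exactly $d$ lines and on at least one active line (because its column in some direction must exit $A$ within $\Lambda$, as $A$ does not fill $\Lambda$ — here one invokes that $\Lambda$ is a product of intervals so every maximal segment of $A$ inside $\Lambda$ along $e_i$ has an endpoint adjacent to a point of $\Lambda \setminus A$ or to $\Lambda^\cc$, and in the latter case that endpoint is in $\partial_- A$ too). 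Carrying the bookkeeping through yields exactly the factor $K_\ast^{-d}$.

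The main obstacle is precisely this last counting step: getting the clean constant $K_\ast^{-d}$ rather than something like $K_\ast^{-d}/d$ or a dimension-dependent loss. I expect the right argument assigns to each $x \in A \cap \Lambda$ a boundary edge $(x', y') \in \partial A$ with $x' \in \Lambda$ by the rule ``move from $x$ in a fixed direction $e_1$ until you first leave $A$,'' which always succeeds because the $e_1$-extent of $\Lambda$ is finite and $A \cap \Lambda \ne \Lambda$ forces an exit within $\Lambda$ only after one checks the exit happens before leaving $\Lambda$ — and if it does leave $\Lambda$ first, the exit point from $\Lambda$ along $e_1$ is itself in $\partial_- A$ with the edge pointing into $\Lambda^\cc$, still an edge of $\partial A$ with $A$-endpoint in $\Lambda$. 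This map is at most $K_\ast$-to-one (all $x$ mapped to a given edge lie on one line segment of length $\le K_\ast$), and for the half-open box one also checks the segment stays in $\Lambda^{(0)}_K(\zeta)$ or exits it, either way charging the edge correctly. That gives $\abs{A \cap \Lambda} \le K_\ast \cdot \#\set{\text{boundary edges with } A\text{-endpoint in }\Lambda}$, hence \eqref{eq:boxclaim} after I recall $\abs{\Lambda} = K_\ast^d$ enters only through identifying which constant is sharp; summing over $\zeta$ completes the proof. Finally, $\scal{\Chi_A, (-\Delta)\Chi_A} = \abs{\partial A}$ is the identity noted just before the lemma, so \eqref{partial-} follows.
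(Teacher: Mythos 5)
Your overall strategy --- partitioning $\Z^d$ by the lattice $K\Z^d$ and estimating the boundary contribution box by box --- is the same as the paper's, but your implementation has two genuine gaps, and you also miss the observation that makes the per-box isoperimetric bookkeeping unnecessary.

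First, the claim that the boxes $\{\Lambda^{(0)}_K(\zeta)\}_{\zeta\in K\Z^d}$ tile $\Z^d$ is false when $K$ is even: $\Lambda^{(0)}_K(\zeta)$ then has side $K-1$ while the spacing of $K\Z^d$ is $K$, so points are uncovered (for $d=1$, $K=2$ the boxes are the singletons $\{2n\}$ and every odd integer is missed). Hence $\sum_\zeta\abs{A\cap\Lambda^{(0)}_K(\zeta)}$ can be strictly less than $\abs{A}$, and your summation step no longer yields $\abs{\partial A}\ge K_\ast^{-d}\abs{A}$. Switching to the closed boxes $\Lambda_K(\zeta)$ restores covering but introduces overlaps, so indexing boundary edges by the $A$-endpoint over-counts, and the inequality you need then points the wrong way.

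Second, the directional-walk argument is flawed where it asserts that if the walk from $x$ along $e_1$ exits $\Lambda$ while still inside $A$, then the last $\Lambda$-point of the walk lies in $\partial_-A$. The neighbor just outside $\Lambda$ may perfectly well still belong to $A$, in which case no boundary edge is produced at that step and the map you are constructing is undefined at $x$.

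None of this machinery is needed. Your per-box estimate \eqref{eq:boxclaim} is true for a trivial reason, which is precisely the paper's argument: since $\abs{A\cap\Lambda}\le\abs{\Lambda_K(\zeta)}=K_\ast^d$, the right-hand side of \eqref{eq:boxclaim} never exceeds $1$, while the left-hand side is at least $1$ whenever $A\cap\Lambda_K(\zeta)\ne\emptyset$, because the relative-density hypothesis supplies a point of $\Gamma\subset A^\cc$ in $\Lambda^{(0)}_K(\zeta)\subset\Lambda_K(\zeta)$, so the connected box $\Lambda_K(\zeta)$ meets both $A$ and $A^\cc$ and must contain an edge of $\partial A$. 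The paper records exactly this as the two inequalities $\abs{A}\le K_\ast^d N_A$ and $N_A\le\abs{\partial A}$, where $N_A$ counts the $\zeta\in K\Z^d$ with $A\cap\Lambda_K(\zeta)\ne\emptyset$; the slicing and $K_\ast$-to-one map you sketch would (if correct) prove a stronger $K_\ast^{-1}$-type bound, which is a sign that the argument is over-reaching for what the lemma actually claims.
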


\begin{proof}  Let $A \subset \Z^d \setminus \Gamma_K$, set
$A_\zeta= A \cap \Lambda_K(\zeta)$ for $\zeta \in  K\Z^d$,  and let  $N_A=\abs{\set{\zeta \in  K\Z^d; A_\zeta\ne \emptyset}}$.  Then
\beq\label{NA}
\abs{A} \le K_\ast^{d} N_A .
\eeq
On the other hand, $A_\zeta\ne \emptyset$ implies $ \partial A  \cap \pa{\Lambda_K(\zeta)\ \times \Lambda\up{0}_K(\zeta)}\ne \emptyset$ since $\Gamma_K \cap \Lambda\up{0}_K(\zeta)\ne \emptyset$.  We conclude that
$N_A \le \abs{\partial A}$, so \eq{partial-} follows from \eq{NA}
\end{proof}

Let $H=-\Delta$ and fix  $\Gamma \subsetneq \Z^d$ be  $(K,Q)$-relatively dense.  
Following \cite{LS}, we define the Cheeger constants (note $\scal{\Chi_A,  \Chi_A}=\abs{A}$)
\begin{align}\label{defbetaGamma}
\beta(\Gamma) &= \inf_{A \subset \Z^d\setminus \Gamma; \; 1\le \abs{A} < \infty}  \beta_A(\Gamma), \qtx{where} \beta_A(\Gamma)=\frac{ \scal{\Chi_A, (-\Delta_\Gamma) \Chi_A}}{ \abs{A} },\\
\beta(t) &= \inf_{A \subset \Z^d; \; 1\le \abs{A} < \infty}  \beta_A(t), \qtx{where} \beta_A(t)= \frac{ \scal{\Chi_A, H(t) \Chi_A}}{ \abs{A} } \qtx{and}t\ge  0. \notag
\end{align}
Clearly $\beta(\Gamma)\ge E_{\Gamma}$ and $\beta(t) \ge E(t)$ for all $t \ge 0$.

\begin{lemma}\label{lembeta} We have
\beq
K_\ast^{-d} \le  \beta(\Gamma)  \le 2d.
\eeq
Moreover, $\beta(t)$ is a nondecreasing function of $t\ge 0$, and
\beq
 \beta(t) \ge \beta\up{1}(\Gamma)  \ge K_\ast^{-d} \qtx{for} t \ge 2d -1,
\eeq
where $\beta\up{1}(\Gamma)= \min \set{\beta(\Gamma) ,1}$.
\end{lemma}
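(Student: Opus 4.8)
The plan is to prove the four assertions in turn, mostly by assembling the boundary estimate of Lemma~\ref{lembdrybound} with elementary test-function computations. For the lower bound $\beta(\Gamma) \ge K_\ast^{-d}$, I would simply note that for any finite nonempty $A \subset \Z^d \setminus \Gamma$ one has $\scal{\Chi_A,(-\Delta_\Gamma)\Chi_A} = \scal{\Chi_A,(-\Delta)\Chi_A} = \abs{\partial A}$ (the Dirichlet restriction only adds boundary terms, which here only helps), so $\beta_A(\Gamma) = \abs{\partial A}/\abs{A} \ge K_\ast^{-d}$ directly by \eq{partial-}; taking the infimum gives the claim. For the upper bound $\beta(\Gamma) \le 2d$, I would exhibit a single test set: since $\Gamma \subsetneq \Z^d$, pick any $x_0 \in \Z^d \setminus \Gamma$ and take $A = \set{x_0}$, so that $\beta_A(\Gamma) = \scal{\delta_{x_0}, -\Delta_\Gamma \delta_{x_0}} = \eta_{\{x_0\}}(x_0) \le 2d$ (or even just use $\scal{\delta_{x_0},-\Delta\delta_{x_0}} = 2d$ as an upper bound, which is cleaner).

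For the monotonicity of $\beta(t)$: for each fixed finite $A$, $\beta_A(t) = \abs{\partial A}/\abs{A} + t\abs{A \cap \Gamma}/\abs{A}$ is nondecreasing (indeed affine, with nonnegative slope) in $t$; the infimum over $A$ of nondecreasing functions is nondecreasing, giving the claim. The one genuine point is the inequality $\beta(t) \ge \beta\up{1}(\Gamma)$ for $t \ge 2d-1$. Here I would fix a finite nonempty $A \subset \Z^d$ and split it as $A = A' \sqcup A''$ with $A' = A \setminus \Gamma$ and $A'' = A \cap \Gamma$. The potential term contributes $t\abs{A''}$, which for $t \ge 2d-1$ is $\ge (2d-1)\abs{A''}$. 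The kinetic term $\abs{\partial A}$ I would bound below using only the part of the boundary coming from $A'$: one has $\scal{\Chi_A, -\Delta\,\Chi_A} = \sum_{x\in A}\eta_{Z^d\setminus A}(x)\cdot(\text{something})$; more precisely I want $\abs{\partial A} \ge \abs{\partial A'} - (\text{edges between } A' \text{ and } A'')$, but each vertex of $A''$ has at most $2d$ neighbors, so the edges between $A'$ and $A''$ number at most $2d\abs{A''}$. Combined with $\abs{\partial A'} \ge K_\ast^{-d}\abs{A'}$ from Lemma~\ref{lembdrybound} (valid since $A' \subset \Z^d\setminus\Gamma$), we get
\beq
\beta_A(t)\abs{A} = \abs{\partial A} + t\abs{A''} \ge K_\ast^{-d}\abs{A'} - 2d\abs{A''} + (2d-1)\abs{A''} = K_\ast^{-d}\abs{A'} - \abs{A''}.
\eeq
This is not yet good enough when $A''$ is large, so instead I would compare $\abs{\partial A}$ with the boundary edges of $A$ that go into $\Z^d\setminus A$ and also keep the edges from $A''$ to $\Z^d\setminus A$ that cross into $\Gamma^\cc$; alternatively, and more robustly, bound $\beta_A(t) \ge \min\{\text{contribution per vertex of } A', \text{ contribution per vertex of } A''\}$ and optimize. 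The cleaner route: write $\abs{\partial A} \ge \abs{\partial_{\mathrm{out}} A'}$ where $\partial_{\mathrm{out}}A'$ counts edges from $A'$ to $\Z^d\setminus A$; every edge from $A'$ leaves either to $A''\subset\Gamma$ or to $\Z^d\setminus A$, and edges into $A''$ are ``absorbed'' by the potential there — I would set this up so that each unit of $\abs{\partial A'}$ is paid for either by a boundary edge of $A$ or by a factor $t \ge 2d-1$ attached to a vertex of $A''$, yielding $\abs{\partial A} + (2d-1)\abs{A''} \ge \abs{\partial A'} \ge K_\ast^{-d}\abs{A'}$, hence $\beta_A(t)\abs{A} \ge \abs{\partial A} + t\abs{A''} \ge K_\ast^{-d}\abs{A'} + \abs{A''} \ge \min\{K_\ast^{-d},1\}\abs{A} = \beta\up{1}(\Gamma)\abs{A}$.

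The main obstacle is exactly this bookkeeping in the last step: carefully charging each boundary edge of the trimmed set $A'$ either to an actual boundary edge of $A$ or to a $\Gamma$-vertex of $A$ (where the potential pays, using $t \ge 2d-1$ to beat the maximal vertex degree $2d$), so that the combined estimate $\abs{\partial A} + t\abs{A\cap\Gamma} \ge K_\ast^{-d}\abs{A\setminus\Gamma} + \abs{A\cap\Gamma}$ holds. Once that inequality is in hand, dividing by $\abs{A}$ and taking the infimum over $A$ finishes the proof, and the threshold $2d-1$ is seen to be exactly what makes the charging argument close.
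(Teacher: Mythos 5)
Your two-sided bound on $\beta(\Gamma)$ and the monotonicity of $\beta(t)$ are correct and essentially as in the paper: $\beta(\Gamma)\ge K_\ast^{-d}$ from Lemma~\ref{lembdrybound}, $\beta(\Gamma)\le 2d$ from a singleton test set, and monotonicity because each $\beta_A(t)$ is affine nondecreasing in $t$.

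The bookkeeping you flag as the main obstacle is a genuine gap, and your ``cleaner route'' inequality $\abs{\partial A}+(2d-1)\abs{A''}\ge\abs{\partial A'}$ (with $A'=A\setminus\Gamma$, $A''=A\cap\Gamma$) is false. Take $d=1$, $\Gamma=2\Z$, $A=\{1,2,3\}$; then $A'=\{1,3\}$, $A''=\{2\}$, $\abs{\partial A}=2$, $\abs{\partial A'}=4$, and $2+(2d-1)\cdot 1=3<4$. An interior $\Gamma$-vertex of $A$ can have all $2d$ of its neighbours in $A'$, so the per-vertex charge must be the full degree $2d$, not $2d-1$. The repaired bound $\abs{\partial A}\ge\abs{\partial A'}-2d\abs{A''}$, together with $\abs{\partial A'}\ge\beta(\Gamma)\abs{A'}$ (use the \emph{definition} of $\beta(\Gamma)$ here; the weaker $K_\ast^{-d}$-bound from Lemma~\ref{lembdrybound} that you invoke only gives $\beta(t)\ge K_\ast^{-d}$, not $\beta(t)\ge\beta\up{1}(\Gamma)$, since $\min\{K_\ast^{-d},1\}=K_\ast^{-d}\le\beta\up{1}(\Gamma)$ rather than equality), makes your charging argument close, but only at threshold $t\ge 2d+1$: $\beta_A(t)\abs{A}\ge\beta(\Gamma)\abs{A'}+(t-2d)\abs{A''}\ge\beta\up{1}(\Gamma)\abs{A}$.

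The paper argues this step differently, removing $\Gamma$-vertices from $A$ one at a time: under the running hypothesis $\beta_A(t)\le 1$ (whence the truncation $\beta\up{1}(\Gamma)=\min\{\beta(\Gamma),1\}$), one shows $\beta_A(t)\ge\beta_{A_x}(t)$ for $x\in A\cap\Gamma$ and iterates down to $A\subset\Z^d\setminus\Gamma$, where $\beta_A(t)=\beta_A(\Gamma)\ge\beta(\Gamma)$. But the one-step comparison rests on $\scal{\Chi_A,H(t)\Chi_A}-\scal{\Chi_{A_x},H(t)\Chi_{A_x}}=2k-2d+t$ (with $k$ the number of neighbours of $x$ outside $A$), which is only $\ge t-2d$ in the worst case $k=0$, so the iteration likewise requires $t\ge 2d+1$ to decrease the numerator by at least $1$ per removed vertex. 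Indeed the inequality $\beta(t)\ge\beta\up{1}(\Gamma)$ fails at $t=2d-1$: for $d=1$, $\Gamma=2\Z$, $A=\{0,1,\dots,2m\}$ one has $\beta_A(1)=(m+3)/(2m+1)\to\tfrac12<1=\beta\up{1}(\Gamma)$. Both your charging argument (with the constant repaired to $2d$) and the paper's iterative removal prove the claim with $t\ge 2d+1$ in place of $t\ge 2d-1$; once the constant is repaired, your global charging argument is arguably the simpler of the two.
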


\begin{proof}
Given  $A \subset \Z^d\setminus \Gamma$, $\abs{A}\ge 1$, it follows from Lemma~\ref{lembdrybound}, that
\beq	
\beta_A(t)=\beta_A(\Gamma) =\frac{ \scal{\Chi_A, (-\Delta) \Chi_A}}{ \abs{A} }\ge K_\ast^{-d}\qtx{for all} t\ge   0.
\eeq
It follows that  $\beta(\Gamma) \ge K_\ast^{-d}$. On the other hand,
there exists $y_0 \in  \Z^d\setminus \Gamma$, since $\Gamma  \subsetneq \Z^d$,  and we have
\beq
\beta(\Gamma)\le \beta_{\set{y_0}}(\Gamma) \le 2d.
\eeq

Let $A \subset \Z^d; \; 1\le \abs{A} < \infty$.
 Suppose $x \in A\cap \Gamma$,  $A_x=A\setminus \set{x}$, and assume $\abs{A_x} \ge 1$. Then $\abs{A}=\abs{A_x}+1$ and
\beq
\scal{\Chi_A, H(t) \Chi_A}\ge  \scal{\Chi_{A_x}, H(t) \Chi_{A_x}} -2d +t ,    \eeq
so, if $t \ge 2d -1$,
\beq
\beta_{A_x}(t) \le   \frac  {\scal{\Chi_A, H(t) \Chi_A} -1 }{\abs{A}-1 }\le   \frac  {\scal{\Chi_A, H(t) \Chi_A}   }{\abs{A} }=\beta_{A}(t),
\eeq
assumming  $\scal{\Chi_A, H(t) \Chi_A}\le \abs{A}$, i.e., $\beta_A(t)\le 1$.  If $\abs{A\setminus \Gamma}\ge 1$,
repeating this procedure until we removed all points of $\Gamma$ from the set $A$ we obtain
\beq
\beta_{A}(t) \ge  \beta_{A\setminus \Gamma}(t) = \beta_{A\setminus \Gamma}(\Gamma)\ge \beta(\Gamma).
\eeq
If $A\subset \Gamma$, $\abs{A}\ge 1$, we pick $x_0   \in A$, so we get
\beq
\beta_{A}(t) \ge  \beta_{\set{x_0}}(t)  =  2d +t  \ge 2d\ge \beta(\Gamma).
\eeq

We thus conclude that for all $t \ge 2d -1$ we have  $\beta_{A}(t)   \ge \beta\up{1}(\Gamma) $ for all $A \subset \Z^d; \; 1\le \abs{A} < \infty$. The lemma follows.
\end{proof}

Theorem~\ref{thmEGammaDelta} follows from the following theorem.
\begin{theorem} Let $\Gamma \subsetneq \Z^d$ be  $(K,Q)$-relatively dense.  Then\beq\label{EGammabound}
E_{\Gamma}(-\Delta) \ge \frac {\pa{\beta(\Gamma)}^2} {4d} \ge \frac 1 {4dK_\ast^{2d}} .
\eeq
In addition,
\beq\label{Etbound}
E_\Gamma(-\Delta, t)\ge \frac {\pa{\beta\up{1}(\Gamma)}^2} {6d-1} \ge \frac 1 {(6d-1)K_\ast^{2d}}\qtx{for} t \ge 2d -1.
\eeq

\end{theorem}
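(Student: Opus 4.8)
The plan is to run the classical discrete Cheeger argument. First note that the second inequality in each of \eqref{EGammabound} and \eqref{Etbound} is immediate from Lemma~\ref{lembeta}, which supplies $\beta(\Gamma)\ge K_\ast^{-d}$ and $\beta\up{1}(\Gamma)\ge K_\ast^{-d}$. Hence the entire content is the two Cheeger-type lower bounds $E_\Gamma(-\Delta)\ge\frac{\beta(\Gamma)^2}{4d}$ and $E_\Gamma(-\Delta,t)\ge\frac{(\beta\up{1}(\Gamma))^2}{6d-1}$ for $t\ge2d-1$, and both will come from the variational principle combined with the discrete co-area (layer-cake) formula.

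For the first bound, take $0\ne\vphi\in\ell^2(\Gamma^\cc)$. Since $\abs{\abs{\vphi(x)}-\abs{\vphi(y)}}\le\abs{\vphi(x)-\vphi(y)}$, replacing $\vphi$ by $\abs{\vphi}$ does not increase $\scal{\vphi,-\Delta_\Gamma\vphi}$ and preserves the norm, so we may assume $\vphi\ge0$. Put $f=\vphi^2$ and, for $s>0$, $A_s=\set{x:f(x)>s}$; these are finite subsets of $\Gamma^\cc$. The two layer-cake identities $\sum_x f(x)=\int_0^\infty\abs{A_s}\,\di s$ and $\tfrac12\sum_{\norm{x-y}=1}\abs{f(x)-f(y)}=\int_0^\infty\abs{\partial A_s}\,\di s$ (valid since $f\ge0$, $f\in\ell^1$), together with $\abs{\partial A_s}\ge\beta(\Gamma)\abs{A_s}$ from the definition of $\beta(\Gamma)$, give $\tfrac12\sum_{\norm{x-y}=1}\abs{f(x)-f(y)}\ge\beta(\Gamma)\norm{\vphi}^2$. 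On the other hand, writing $\abs{f(x)-f(y)}=\abs{\vphi(x)-\vphi(y)}\,(\vphi(x)+\vphi(y))$ and applying Cauchy--Schwarz, using $\sum_{\norm{x-y}=1}(\vphi(x)-\vphi(y))^2=2\scal{\vphi,-\Delta_\Gamma\vphi}$ and $\sum_{\norm{x-y}=1}(\vphi(x)+\vphi(y))^2\le8d\norm{\vphi}^2$ (each site has $2d$ neighbours, and $(a+b)^2\le2a^2+2b^2$), we get $\tfrac12\sum_{\norm{x-y}=1}\abs{f(x)-f(y)}\le2\sqrt d\,\scal{\vphi,-\Delta_\Gamma\vphi}^{1/2}\norm{\vphi}$. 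Comparing the two estimates and squaring yields $\scal{\vphi,-\Delta_\Gamma\vphi}\ge\frac{\beta(\Gamma)^2}{4d}\norm{\vphi}^2$, and taking the infimum over $\vphi$ proves $E_\Gamma(-\Delta)\ge\frac{\beta(\Gamma)^2}{4d}$.

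For the second bound the only new feature is that the super-level sets $A_s$ of a test function $\vphi\in\ell^2(\Z^d)$ need not lie in $\Gamma^\cc$; instead we invoke Lemma~\ref{lembeta}, which for $t\ge2d-1$ gives $\scal{\Chi_{A_s},H(t)\Chi_{A_s}}=\abs{\partial A_s}+t\abs{A_s\cap\Gamma}\ge\beta\up{1}(\Gamma)\abs{A_s}$. Integrating in $s$ (also using $\int_0^\infty\abs{A_s\cap\Gamma}\,\di s=\sum_{x\in\Gamma}f(x)$) and applying the same Cauchy--Schwarz bound on $\tfrac12\sum\abs{f(x)-f(y)}$ as above, we arrive at $2\sqrt d\,a^{1/2}N^{1/2}+tP\ge\beta\up{1}(\Gamma)\,N$, where $a=\scal{\vphi,-\Delta\vphi}$, $N=\norm{\vphi}^2$, $P=\sum_{x\in\Gamma}\vphi(x)^2$, so that $\scal{\vphi,H(t)\vphi}=a+tP$. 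If $tP\ge\beta\up{1}(\Gamma)N$ then $a+tP\ge\beta\up{1}(\Gamma)N\ge(\beta\up{1}(\Gamma))^2N/(6d-1)$, using $\beta\up{1}(\Gamma)\le1\le6d-1$. If $tP<\beta\up{1}(\Gamma)N$, squaring the displayed inequality gives $a\ge(\beta\up{1}(\Gamma)N-tP)^2/(4dN)$, and the function $\tau\mapsto(\beta\up{1}(\Gamma)-\tau)^2/(4d)+\tau$ on $[0,\beta\up{1}(\Gamma))$ has derivative $1-(\beta\up{1}(\Gamma)-\tau)/(2d)\ge1-\tfrac1{2d}>0$, hence is minimised at $\tau=0$; this gives $a+tP\ge(\beta\up{1}(\Gamma))^2N/(4d)\ge(\beta\up{1}(\Gamma))^2N/(6d-1)$, since $6d-1\ge4d$. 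Taking the infimum over $\vphi$ gives $E_\Gamma(-\Delta,t)\ge(\beta\up{1}(\Gamma))^2/(6d-1)$ for $t\ge2d-1$. Combining the two Cheeger bounds with $\beta(\Gamma),\beta\up{1}(\Gamma)\ge K_\ast^{-d}$ (Lemma~\ref{lembeta}) finishes the proof.

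As for difficulty: the combinatorial core---the isoperimetric estimate $\abs{\partial A}\ge K_\ast^{-d}\abs{A}$ for $A\subset\Gamma^\cc$ and its consequences for $\beta(\Gamma)$ and $\beta(t)$---is already established in Lemmas~\ref{lembdrybound} and~\ref{lembeta}, so the present statement is just the textbook discrete Cheeger inequality. The only step needing a little care is the potential term $t\Chi_\Gamma$: one must push it through the co-area formula as well and then dispose of the resulting $tP$ contribution by the elementary case split above, which is exactly where the hypothesis $t\ge2d-1$ (ensuring $\beta(t)\ge\beta\up{1}(\Gamma)$) is used.
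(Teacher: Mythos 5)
Your proof is correct, and it takes a genuinely different technical route from the paper for the second estimate \eqref{Etbound}. The paper augments the graph to $\widehat{\Z^d}=\Z^d\cup\set{\infty}$, connecting $\infty$ to every site of $\Gamma$ with weight $t$; this makes $H(t)$ a graph Laplacian on the larger graph, so the standard Cheeger argument applies verbatim, with the denominator picking up an extra $2t\norm{\Chi_\Gamma\vphi}^2$ term and giving $E(t)\ge\beta(t)^2/(4d+t)$; evaluating at $t=2d-1$ and using monotonicity in $t$ yields $\eqref{Etbound}$. You instead stay on $\Z^d$: you push the potential term through the co-area formula as $\int_0^\infty\abs{A_s\cap\Gamma}\,\di s=\sum_{x\in\Gamma}\vphi(x)^2$, so that Lemma~\ref{lembeta} (applied to each super-level set) produces the inequality $2\sqrt d\,a^{1/2}N^{1/2}+tP\ge\beta\up{1}(\Gamma)N$, and you then dispose of the $tP$ contribution by an elementary case split rather than by a single Cauchy--Schwarz on the augmented graph. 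Both routes hinge on the same key lemma (Lemma~\ref{lembeta}) and the same co-area/Cauchy--Schwarz mechanism; your version is arguably a bit more elementary (no auxiliary vertex), and in fact your Case~2 monotonicity argument gives the slightly sharper denominator $4d$ in place of $6d-1$, though you correctly refrain from claiming more than the statement asserts. The first bound \eqref{EGammabound} is the textbook discrete Cheeger inequality, which you and the paper prove in essentially the same way (the paper states it as the $t$-independent variant of the same computation).
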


  \begin{proof} We write $H(t) = H_\Gamma(t)= -\Delta + t \Chi_\Gamma$,  $E_{\Gamma}=E_{\Gamma}(-\Delta)$,  $E( t) =E_\Gamma(-\Delta, t) $.

We prove \eq{Etbound} first.
Following \cite{LS}, we introduce  $\widehat{\Z^d}=\Z^d\cup \set{\infty}$, and for $t>0$  define the self-adjoint bounded operator $\what{H(t)}$ on $\ell^2(\widehat{\Z^d})$ by
\beq
\what{H(t)}\phi(x) = \sum_{y \in \widehat{\Z^d}}   \kappa(x,y)\pa{\phi(x)-\phi(y)},
\eeq
where
\begin{enumerate}
\item  $\kappa(x,y) = 1$ for $x,y \in \Z^d$, $\abs{x-y}=1$,
\item  $\kappa(x,y) = 0$ for $x,y \in \Z^d$, $\abs{x-y}\ne 1$,
\item  $\kappa(x,\infty)=\kappa(\infty,x)= t\Chi_\Gamma(x)$ for $x \in \Z^d$,
\item  $\kappa(\infty,\infty)=0$.
\end{enumerate}

Given $\vphi \in \ell^2(\Z^d)$, we extend it to $\what\vphi \in \ell^2(\what{\Z^d})$ by setting  $\what{\vphi}(\infty)=0$.  It follows that  $\what{H(t)\vphi}= \what{H(t)}\what{\vphi}$, and we have
\begin{align}
\scal {\vphi,H(t)\vphi}_{\ell^2({\Z^d})}=\scal {\what\vphi,\what{H(t)}\what\vphi}_{\ell^2(\what{\Z^d})}= \tfrac 12 \sum_{x,y \in \widehat{\Z^d}}   \kappa(x,y)\abs{\what{\vphi}(x)-\what{\vphi}(y)}^2.
\end{align}

Note that
\beq
\scal {\vphi,H(t)\vphi}=\scal {\vphi,(-\Delta)\vphi} + t  \norm{ \Chi_\Gamma \vphi}^2,
\eeq
so
\begin{align}
E(t)&= \inf \set{\scal {\vphi,H(t)\vphi}: \, \vphi \in \ell^2(\Z^d),\norm{ \vphi} =1 }\\ \notag
&= \inf \set{\scal {\vphi,H(t)\vphi}: \, \vphi \in \ell^2(\Z^d; \R),\norm{ \vphi} =1, \abs{\supp \vphi} < \infty }
\end{align}

Now let $\vphi$ be a real-valued function on $\Z^d$ with finite support.  We have, using the Cauchy-Schwarz inequality,
\begin{align}
2\scal {\vphi,H(t)\vphi}_{\ell^2({\Z^d})}&= \sum_{x,y \in \widehat{\Z^d}}   \kappa(x,y)\pa{\what{\vphi}(x)-\what{\vphi}(y)}^2 \\
& \ge \frac {\pa{\sum_{x,y \in \widehat{\Z^d}}   \kappa(x,y)\abs{\what{\vphi}(x)^2-\what{\vphi}(y)^2}}^2 }{\sum_{x,y \in \widehat{\Z^d}}   \kappa(x,y)\pa{\what{\vphi}(x)+\what{\vphi}(y)}^2 }. \notag
\end{align}

For the denominator, we have
\begin{align}
&\sum_{x,y \in \widehat{\Z^d}}   \kappa(x,y)\pa{\what{\vphi}(x)+\what{\vphi}(y)}^2 =
\sum_{x,y \in {\Z^d}; \, \abs{x-y}=1}\pa{\vphi(x)+\vphi(y)}^2 +2 t \scal{\vphi, \Chi_\Gamma \vphi}\\
& \quad \le \sum_{x,y \in {\Z^d}; \, \abs{x-y}=1}\pa{2\vphi(x)^2+2\vphi(y)^2} +2 t \scal{\vphi, \Chi_\Gamma \vphi} \le 8d \norm{\vphi}^2 +2 t \norm{ \Chi_\Gamma \vphi}^2.
\notag
\end{align}
For the numerator, since $\kappa$ is symmetric, we have, setting  
\[A_s=\set{\what{\vphi}^2 >s}= \set{{\vphi}^2 >s}\] for $s \ge 0$,
\begin{align}\notag
&\sum_{x,y \in \widehat{\Z^d}}   \kappa(x,y)\abs{\what{\vphi}(x)^2-\what{\vphi}(y)^2}\\ \notag
&\qquad =
2 \sum_{x,y \in \widehat{\Z^d}}   \kappa(x,y)\Chi\pa{\set{\what{\vphi}(x)^2>\what{\vphi}(y)^2}}\abs{\what{\vphi}(x)^2-\what{\vphi}(y)^2}\\
&\qquad  =2\int_0^\infty \di s \sum_{x,y \in \widehat{\Z^d}}   \kappa(x,y)\Chi\pa{\set{\what{\vphi}(x)^2>s \ge\what{\vphi}(y)^2}}\\
&\qquad  = 2\int_0^\infty \di s \sum_{x,y \in \widehat{\Z^d}}   \kappa(x,y)\Chi_{A_s}(x) \pa{\Chi_{A_s}(x) -\Chi_{A_s}(y)} \notag\\
&\qquad = 2 \int_0^\infty \di s\scal {\Chi_{A_s},H(t)\Chi_{A_s}}
 \ge    2  \beta(t)  \int_0^\infty \di s \abs{A_s} =   2  \beta(t)  \norm{\vphi}^2 .\notag
\end{align}

We conclude that for  a real-valued function $\vphi$  on $\Z^d$ with finite support and $\norm{\vphi}=1$  we have, for all $t \ge 2d -1$,  using Lemma~\ref{lembeta},
\begin{align}
\scal {\vphi,H(t)\vphi} \ge \tfrac 12 \frac{ \pa{2  \beta(t)  \norm{\vphi}^2}^2}{8d \norm{\vphi}^2 +2 t \norm{ \Chi_\Gamma \vphi}^2}\ge \frac{ \pa{  \beta(t)  }^2}{4d  + t }\ge
\frac{ \pa{  \beta\up{1}(\Gamma)  }^2}{4d  + t }.
\end{align}
Thus \beq
E(t) \ge  \frac{ \pa{  \beta\up{1}(\Gamma)  }^2}{4d  + t } \qtx{for} t \ge 2d -1.
\eeq
  Since $E(t)$ is nondecreasing in $t$, we get
  \beq
E(t) \ge  \frac{ \pa{  \beta\up{1}(\Gamma)  }^2}{6d -1 }\qtx{for all} t \ge 2d -1.
\eeq

To prove \eq{EGammabound}, we repeat the above procedure with $-\Delta_\Gamma$, $\Z^d\setminus \Gamma$, $\Z^d$ and $\what{-\Delta_\Gamma}$ instead of $H(t)$, $\Z^d$,
$\what{\Z^d}$ and $\what{H(t)}$, where
$\what{-\Delta_\Gamma}= (-\Delta_\Gamma) \oplus  0  \qtx{on} \ell^2(\Z^d)= \ell^2(\Z^d\setminus \Gamma)  \oplus \ell^2(\Gamma)$,
and $\kappa(x,y) = 1$ for $x,y \in \Z^d$, $\abs{x-y}=1$,
$\kappa(x,y) = 0$ for $x,y \in \Z^d$, $\abs{x-y}\ne 1$, and, given $\vphi \in \ell^2(\Z^d\setminus \Gamma)$, extending it to $\what\vphi\in \ell^2(\Z^d)$ by setting  $\what\vphi(x)=0$ for $x \in \Gamma$. The proof goes through in exactly the same way, and we get \eq{EGammabound}.
\end{proof}

\section{Trimmed Anderson models}

In this section we prove Proposition \ref{prop:admfam} and Theorem \ref{thmWegbottom}.

\subsection{The ground state energy}\label{secbottomsp}

\begin{proof}[Proof of Proposition \ref{prop:admfam}] Let $H_{\bom,\lambda}$ be a $\Gamma$-trimmed  Anderson model with  $\mu_\zeta=\mu$ for all $\zeta \in \Gamma$ with $0=\inf \supp \mu$. 
To show that $E_\emptyset=E_\emptyset(H_0)\in\sigma_{{\rm ess}}(H_0)$ we construct an orthonormal sequence $\set{\phi_n}_{n\in \N}$  in $\ell^2(\Z^d)$ such that 
\beq
\|(H_0-E_\emptyset)\phi_n\|\le 1/n \qtx{for all} n \in \N
\eeq
The existence of such sequence is readily guaranteed by \eqref{eq:condprob}. Hence  $E_\emptyset\in\sigma_{{\rm ess}}(H_0)$ by Weyl's criterion.

To show that \eq{E0prob} holds, for each $\eps>0$ we use \eqref{eq:condprob} to construct   an orthonormal sequence $\set{\psi\up{\epsilon}_n}_{n\in \N}$  in $\ell^2(\Z^d)$ such that
$\supp \psi\up{\epsilon}_n\subset \Lambda_L(x_n)$ with $L=L(\epsilon)$ for all $n\in N$,
with $\norm{x_n-x_M}_\infty >L$ for $n\ne m$, and 
\begin{gather}
\|(H_0-E_\emptyset)\psi\up{\epsilon}_n\|\le \epsilon \qtx{for all} n \in \N.
\end{gather}

We then have 
\beq
E_\emptyset(H_{\bom,\lambda})\le \scal{\psi\up{\epsilon}_n, H_{\bom,\lambda}\psi\up{\epsilon}_n}\le \epsilon+\sum_{\zeta \in \Gamma\cap \Lambda_L(x_n)} \omega_\zeta\abs{\psi\up{\epsilon}_n(\zeta)}^2 \sqtx{for all} n\in\N.
\eeq
 But 
\beq
\P \set{\inf_{n\in \N}\max_{\zeta \in \Lambda_L(x_n)} \omega_\zeta<\epsilon L^{-d}}=1,
\eeq
from which it follows that 
\beq
\P \set{\sigma(H_{\bom,\lambda}) \cap [E_\emptyset,E_\emptyset+2\epsilon]\neq\emptyset)}=1.
\eeq 
Since $\epsilon$ is arbitrary, the result follows.
\end{proof}

\subsection{The Wegner estimate}\label{secwegner}

\begin{proof}[Proof of Theorem \ref{thmWegbottom}]

Let $H_{\bom,\lambda}$ be a $\Gamma$-trimmed  Anderson model,  fix    $E_{1}\in (E_\emptyset(H_0), E_\Gamma(H_0))$, and let  $\kappa = \kappa(H_0,\Gamma,E_1)$ be  as in 
\eq{kappabottom}.  We clearly have $\kappa>0$.

(We can derive a lower bound for  $\kappa$, as stated in Remark~\ref{remkappa}. The estimate  \eqref{delgambndInt}  states that
\beq\label{kappa1}
 E_\Gamma(H_0,s)-E_\emptyset(H_0) \ge   \frac Q{\pa{ 2dK-1}{Y_{d,V\up{0}}^{2dK-1}}}\pa{1-\pa{ \frac {Y_{d,V\up{0}}}{Y_{d,V\up{0}}+s}}^{2dK-1}}
 \eeq
for all $s>0$,  which implies   $E_\Gamma(H_0,s) >E_1$ for 
\beq\label{kappa2}
s > s_0=  Y_{d,V\up{0}}\pa{ \pa{1-   { (E_1-E_\emptyset(H_0))Q^{-1}(2dK-1)Y_{d,V\up{0}}^{2dK-1}} }^{-\frac 1 { 2dK-1}}   -1}.
\eeq
Using \eqref{eq:bndder}, we get
 \begin{align}\label{kappa3}
\kappa \ge \sup_{s>s_0} \frac {E_\Gamma(H_0,s) -E_\Gamma(H_0,s_0)}s   \ge \sup_{s>s_0} \frac {s-s_0}s  Q\pa{Y_{d,V\up{0}}+s}^{-2dK}.
\end{align}
The supremum  is attained at
\beq\label{kappa4}
s= \frac {2Kd+1}{4Kd} \pa{1 + \sqrt{1+ \frac {8Kd Y_{d,V\up{0}}}{(2Kd+1)^2s_0   } } }s_0;
\eeq
to get a simpler lower bound we take $s=s= \frac {2Kd+1}{2Kd} s_0$,  getting
\beq\label{kappa5}
\kappa \ge  \frac{Q} {2dK+1} \pa{ Y_{d,V\up{0}}+  \frac {2Kd+1}{2Kd} s_0}^{-2dK},
\eeq
which is  \eqref{eq:bndkappa}.)

We now proceed as in \cite[Proof of Theorem~1.7]{Kuc}.  Let $\La=\La_L(x_0)$ with $x_0 \in \Z^d$ and $L>0$, and note that  ($H\up{\La}_{0,\Gamma}(t)=  \pa{(H_0)_\Gamma (t)}\up{\La} $)
\beq
E_\Gamma(H^\La_0,t)=E_\emptyset \pa{H\up{\La}_{0,\Gamma}(t)}\ge E_\emptyset \pa{H_{0,\Gamma}(t)}=E_\Gamma(H_0,t),
\eeq
so
\beq
 \kappa(H^\La_0,\Gamma,E_1)= \sup_{s>0; \; E_\Gamma(H^\La_0,s) >E_1} \frac {E_\Gamma(H^\La_0,s) -E_{1}}s \ge \kappa(H_0,\Gamma,E_1)=\kappa>0.
\eeq
As a consequence, \eq{PVPb} follows immediately from \cite[Lemma~4.1]{Kuc}.

The Wegner estimate \eq{preWegner} follows using \eq{PVPb}. For any closed interval $I \subset (-\infty,E_{1}]$ we have
\begin{align}
 \tr \Chi_{I}(H_{\bom,\lambda}\up{\La})& \le \kappa^{-1} \tr \Chi_{I}(H_{\bom,\lambda}\up{\La})\Chi_{\Gamma \cap \La}\Chi_{I}(H_{\bom,\lambda}\up{\La})= \kappa^{-1} \tr \Chi_{\Gamma \cap \La}\Chi_{I}(H_{\bom,\lambda}\up{\La})\Chi_{\Gamma \cap \La}\\ \notag  &
 = \kappa^{-1} \sum_{\zeta \in \Gamma \cap \La} \scal{\delta_\zeta, \Chi_{I}(H_{\bom,\lambda}\up{\La})\delta_\zeta }.
\end{align}
Since by spectral averaging \cite[Eq.~(3.16)]{CHK2} (see also \cite[Appendix~A]{CGK})
\beq
\int \di \mu_\zeta (\omega_\zeta) \scal{\delta_\zeta, \Chi_{I}(H_{\bom,\lambda}\up{\La})\delta_\zeta } \le 8S_{ \mu_\zeta}(\lambda^{-1}\abs{I}),
\eeq
we get  \eq{preWegner}. 
\end{proof}

\begin{acknowledgement}
We are grateful to Sasha Sodin for useful discussions on isoperimetric estimates.
\end{acknowledgement}

\end{document}